\newcommand{\eqdef}{\triangleq}
\newtheorem{theorem}{Theorem}
\newtheorem{corollary}[theorem]{Corollary}
\newtheorem{definition}[theorem]{Definition}
\newtheorem{example}[theorem]{Example}
\newtheorem{proposition}[theorem]{Proposition}
\newtheorem{remark}[theorem]{Remark}
\newenvironment{proof}[1][Proof]{\noindent\textbf{#1.} }{\ \rule{0.5em}{0.5em}}
\date{}
\author{\large Luciano Panek and Nayene Michele Pai\~ao Panek
\thanks{Luciano Panek is with the Center of Exact Sciences and Engineering, State University of West Paran\'a, Foz do Igua\c{c}u, Paran\'a, Brazil (e-mail: luciano.panek@unioeste.br).}
\thanks{Nayene Michele Pai\~ao Panek is with the Center of Exact Sciences and Engineering, State University of West Paran\'a, Foz do Igua\c{c}u, Paran\'a, Brazil (e-mail: nayene.panek@unioeste.br).}
\thanks{Copyright (c) 2021 IEEE. Personal use of this material is permitted.  However, permission to use this material for any other purposes must be obtained from the IEEE by sending a request to pubs-permissions@ieee.org.}
}
\title{Optimal Anticodes, Diameter Perfect Codes, Chains and Weights}
\begin{document}

\maketitle

\begin{abstract}
Let $P$ be a partial order on $[n] = \{1,2,\ldots,n\}$,  $\mathbb{F}_{q}^n$ be the linear space of $n$-tuples over a finite field $\mathbb{F}_{q}$ and  $w$ be a weight on $\mathbb{F}_{q}$. In this paper, we consider metrics on $\mathbb{F}_{q}^n$ induced by chain orders $P$ over $[n]$ and weights $w$ over $\mathbb{F}_q$, and we determine the cardinality of all optimal anticodes and completely classify them. Moreover, we determine all diameter perfect codes for a set of relevant instances on the aforementioned metric spaces.
		
\vspace{0.5cm} 
		
\textit{Key words}: \textrm{poset metric, pomset metric, NRT metric, perfect code, MDS code, anticode, diameter perfect code}.
\end{abstract}

	
\section{Introduction}

Classically, coding theory takes place in linear spaces
over finite fields or modules over rings, endowed with a metric, e.g., the linear space $\mathbb{F}_q^n$ of all $n$-tuples over a finite field $\mathbb{F}_{q}$ endowed with the Hamming metric or the module $\mathbb{Z}_{m}^n$ of all $n$-tuples over a ring $\mathbb{Z}_{m}$ of integers modulo $m$ endowed with the Lee metric. 

In a given metric space, codes that attain the sphere-packing bound are called perfect. A possible general setting for the existence problem of perfect codes is the class of distance regular graphs, introduced by Biggs (see \cite{Biggs}), including the nearly ubiquitous Hamming metric spaces (also called Hamming graphs).

The Johnson graphs and the Grassmann graphs are other examples of distance regular graphs (see \cite{dist_reg_graphs}). For the Hamming graphs over $\mathbb{F}_q$, there are no nontrivial perfect codes except for the codes having the parameters of the Hamming codes and the two Golay codes. Martin and Zhu \cite{Martin_Grassman_perfect_codes} showed that there are no nontrivial perfect codes in the Grassmann graphs. Determining all perfect codes is an open problem for Johnson graphs. It was conjectured by Delsarte in the 1970's that there are no nontrivial perfect codes in Johnson graphs (see \cite{Delsarte}). See \cite{Tuvi_Deslsarte_conjecture} and the references therein for progress towards proving the conjecture of Delsarte.

In his pioneer work \cite{Delsarte}, Delsarte also proved the following result:

\begin{theorem}[Delsarte]\label{theo_Delsarte}
	Let $\Gamma = (V,E)$ be a distance regular graph. Let $X$ and $Y$ be subsets of $V$ such that the nonzero distances occurring between vertices of $X$ do not occur between vertices of $Y$. Then $$|X| \cdot |Y| \leq |V|.$$
\end{theorem}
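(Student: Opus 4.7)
The plan is to deduce the bound from the commutative association scheme attached to the distance-regular graph $\Gamma$. Let $A_0 = I, A_1, \ldots, A_d$ be the distance matrices and $E_0 = \frac{1}{|V|}J, E_1, \ldots, E_d$ the primitive idempotents of the Bose--Mesner algebra, so that $A_i = \sum_j p_i(j)\, E_j$ and each $E_j$ is a symmetric positive semidefinite projection. For any nonempty subset $Z \subseteq V$, the \emph{inner distribution} $a_i(Z) = \chi_Z^{\top} A_i \chi_Z / |Z|$ and its MacWilliams dual $a'_j(Z) = \frac{|V|}{|Z|^2} \chi_Z^{\top} E_j \chi_Z$ are both nonnegative (the latter because $E_j \succeq 0$), with $a'_0(Z) = 1$ and $\sum_j a'_j(Z) = |V|/|Z|$.

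First I would introduce the \emph{anticode indicator} element of the Bose--Mesner algebra, namely
$$B := \sum_{i \in \{0\} \cup D(Y)} A_i \;=\; \sum_{j=0}^{d} \beta_j\, E_j,$$
whose eigenvalues are $\beta_j = \sum_{i \in \{0\} \cup D(Y)} p_i(j)$. The key feature of $B$ is that it is a $\{0,1\}$-matrix all of whose row sums equal $\beta_0 = 1 + \sum_{i \in D(Y)} v_i$ (the "valency of the anticode"), so by Perron--Frobenius $\beta_0$ is the spectral radius of $B$.

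Next, I would evaluate the quadratic forms $\chi_X^{\top} B \chi_X$ and $\chi_Y^{\top} B \chi_Y$ in two ways. Combinatorially, since every ordered pair in $Y \times Y$ has distance in $\{0\} \cup D(Y)$, we get $\chi_Y^{\top} B \chi_Y = |Y|^2$; and since $D(X) \cap D(Y) = \emptyset$, only the diagonal $A_0$ contributes to $\chi_X^{\top} B \chi_X$, giving $|X|$. Spectrally, both forms expand as $\sum_j \beta_j \chi_Z^{\top} E_j \chi_Z = \frac{|Z|^2}{|V|} \sum_j \beta_j \, a'_j(Z)$. Comparing, I obtain
$$\sum_j \beta_j\, a'_j(X) \;=\; \frac{|V|}{|X|}, \qquad \sum_j \beta_j\, a'_j(Y) \;=\; |V|.$$

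Finally, I would close the argument by combining these with the Perron bound $|Y|^2 = \chi_Y^{\top} B \chi_Y \leq \beta_0 \|\chi_Y\|^2 = \beta_0 |Y|$, giving $|Y| \leq \beta_0$, together with $\beta_0 \leq |V|/|X|$ read off from the $X$-identity using $a'_0(X) = 1$ and $a'_j(X) \geq 0$. Multiplying yields $|X| \cdot |Y| \leq |V|$. The hard part is the eigenvalue step: using $|Y|^2 \leq \beta_0 |Y|$ is legitimate only when $B$ acts by its spectral radius in the right direction on $\chi_Y$, which is exactly the point where Delsarte's linear programming duality intervenes---either by passing to a PSD modification of $B$ in the Bose--Mesner algebra, or by invoking the dual feasibility of the indicator of $\{0\} \cup D(Y)$ against the $a'_j(X) \geq 0$ constraints. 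Making this spectral/LP step precise is the crux, while the rest is a direct bookkeeping of the association scheme identities.
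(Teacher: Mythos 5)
The paper does not actually prove this statement---it quotes it from Delsarte and only proves the translation-invariant analogue (Theorem~\ref{theo_Delsart_inv_trans}) by the elementary Deza--Frankl sumset argument $|A+B|=|A|\cdot|B|$, which is unavailable here because a distance-regular graph carries no group structure. So your association-scheme framework is the right (indeed the standard) setting, and your two combinatorial evaluations $\chi_Y^{\top}B\chi_Y=|Y|^2$ and $\chi_X^{\top}B\chi_X=|X|$ are correct. The gap is in how you combine them. You split the target into $|Y|\le\beta_0$ and $\beta_0\le|V|/|X|$. The first inequality is fine and needs no LP input at all: $B$ is symmetric and nonnegative with constant row sums $\beta_0$, so $\beta_0$ is its largest eigenvalue---this is \emph{not} where the difficulty lies, contrary to what you flag at the end. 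The second inequality is simply false. Reading $\beta_0\le|V|/|X|$ off from $\sum_j\beta_j\,a'_j(X)=|V|/|X|$ requires $\beta_j\ge0$ for all $j$, and the eigenvalues $\beta_j=\sum_{i\in\{0\}\cup D(Y)}p_i(j)$ of a $0$--$1$ matrix in the Bose--Mesner algebra are typically negative for some $j>0$. Concretely, in the cube $H(3,2)$ take $Y$ an edge and $X$ the even-weight code: then $D(Y)=\{1\}$, $D(X)=\{2\}$, $B=I+A_1$ has eigenvalues $4,2,0,-2$, and $\beta_0=4>2=|V|/|X|$ even though $|X|\cdot|Y|=|V|$. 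So the chain $|Y|\le\beta_0\le|V|/|X|$ cannot prove the theorem: the first link is lossy and the second is wrong.

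The repair is not a vague ``PSD modification'' of $B$ but a specific reweighting that couples $X$ and $Y$: replace $B$ by $M=\sum_i (b_i/v_i)A_i$, where $b_i=\chi_Y^{\top}A_i\chi_Y/|Y|$ is the inner distribution of $Y$ (so $M$ is still supported on the relations $\{0\}\cup D(Y)$). Using the association-scheme identity $p_i(j)/v_i=q_j(i)/m_j$ one gets $M=\sum_j\gamma_jE_j$ with $\gamma_j=\frac{|V|}{m_j|Y|}\chi_Y^{\top}E_j\chi_Y\ge0$ and $\gamma_0=|Y|$; thus $M$ is positive semidefinite with prescribed $E_0$-eigenvalue. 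Then $\chi_X^{\top}M\chi_X=(b_0/v_0)\,|X|=|X|$, since every off-diagonal relation in the support of $M$ lies in $D(Y)$ and hence does not occur in $X$, while spectrally $\chi_X^{\top}M\chi_X\ge\gamma_0\,\chi_X^{\top}E_0\chi_X=|Y|\cdot|X|^2/|V|$. Comparing gives $|X|\cdot|Y|\le|V|$ in one stroke; no Perron--Frobenius step on $Y$ is needed. Your write-up correctly assembles all the ingredients of this argument but stops exactly at the point where the actual theorem is proved, and the decoupled bound you propose in its place does not hold.
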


Ahlswede, Aydinian, and Khachatrian in \cite{ahl} gave the definition of diameter perfect code. They examined a variant of Theorem \ref{theo_Delsarte}. Let $\Gamma = (V,E)$ be a distance regular graph. A subset $A$ of $V$ is called an \emph{anticode} with \emph{diameter} $\delta$ if $\delta$ is the maximum graph distance occurring between vertices of $A$. Anticodes with diameter $\delta$ having maximal size are called \emph{optimal anticodes}. Let $diam(A)$ be the diameter of an anticode $A$ in $\Gamma$. Now let $$A^*(D) \eqdef \max\{|A| : diam(A) \leq D\}.$$

\begin{theorem}
	Let $\Gamma = (V,E)$ be a distance regular graph. If $C$ is a code in $\Gamma$ with minimum distance $D + 1$, then
\begin{equation}\label{bound_theo_Ahl}
	A^*(D) \cdot |C| \leq |V|.
\end{equation}
\end{theorem}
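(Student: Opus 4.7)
The plan is to derive the bound as an almost immediate corollary of Theorem \ref{theo_Delsarte}. The strategy is to choose the two subsets $X,Y \subseteq V$ in Delsarte's theorem so that $X$ witnesses the code $C$ and $Y$ witnesses an optimal anticode of diameter $D$, and then check that the distance-disjointness hypothesis holds.

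Concretely, I would first fix an optimal anticode $A \subseteq V$ with $\mathrm{diam}(A) \leq D$, so that $|A| = A^*(D)$ by definition. Then I would set $X = C$ and $Y = A$. By hypothesis, any two distinct codewords of $C$ are at graph distance at least $D + 1$, so the set of nonzero distances occurring between vertices of $X$ is contained in $\{D+1, D+2, \ldots\}$. On the other hand, any two vertices of $A$ are at distance at most $D$, so the set of distances occurring between vertices of $Y$ is contained in $\{0, 1, \ldots, D\}$. Thus no nonzero distance that occurs in $X$ occurs in $Y$, which is precisely the hypothesis of Theorem \ref{theo_Delsarte}.

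Applying Theorem \ref{theo_Delsarte} to this choice of $X$ and $Y$ yields
\[
|C| \cdot A^*(D) = |X| \cdot |Y| \leq |V|,
\]
which is exactly \eqref{bound_theo_Ahl}. There is no real obstacle here: the whole argument is a one-line reduction to Delsarte's theorem, and the only thing to verify carefully is the disjointness of the two distance sets, which follows at once from the definitions of minimum distance and of diameter of an anticode. The content of the statement is therefore conceptual rather than computational: it reinterprets the Delsarte inequality as a sphere-packing-type bound in which the role of the ball is played by an optimal anticode of the appropriate diameter.
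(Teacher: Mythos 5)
Your proof is correct and is exactly the intended derivation: the paper states this result as a direct variant of Theorem \ref{theo_Delsarte} (following Ahlswede \emph{et al.}) without writing out the argument, and the one-line reduction with $X = C$ and $Y$ an optimal anticode of diameter at most $D$ is the standard way to obtain it. The verification that the nonzero distances of $C$ (all at least $D+1$) are disjoint from the distances occurring in the anticode (all at most $D$) is precisely the point, and you have checked it.
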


Ahlswede \emph{et al.} continued with the following new definition. A code $C$ with minimum distance $D+1$ is called \emph{diameter perfect} if equality holds in equation (\ref{bound_theo_Ahl}). This is a generalization of the usual definition of $e$-perfect code as balls of radius $e$ are anticodes with diameter $2e$.

In Hamming graphs, in addition to the Hamming and Golay codes, the extended Hamming and extended Golay codes are diameter perfect, as well as all MDS codes. In the Johnson graphs, no nontrivial $e$-perfect codes are known but all Steiner systems are diameter perfect codes. Nontrivial diameter perfect codes are also known in the Grassmann graphs. For more details, see \cite{ahl}.

Another possible general setting for the existence problem of perfect  and diameter perfect codes is the class of the metric spaces introduced by Panek and Pinheiro in \cite{Panek_Pinheiro}, the so-called \emph{weighted poset metric spaces}, that include any additive metric space (e.g., the Hamming and Lee metric spaces), as well as the poset metric spaces \cite{Bru} and the pomset metric spaces \cite{Sudha}. The class of these spaces is distinct from the class of the distance regular graph metric spaces and, thus, Delsarte's Theorem, does not apply. Fortunately, an alternative proof that was given in \cite{Deza_set_antiset} can be slightly modified to work on this case (Theorem \ref{theo_Delsart_inv_trans}).

The poset metric spaces were introduced by Brualdi, Graves, and Lawrence in \cite{Bru}. These metrics cast new light into many of the
classical invariants of coding theory (such as minimum distance, packing and
covering radius) and many of their basic results (concerning perfect
and MDS codes, MacWilliams' identity, syndrome decoding and so on) with several works published over the years, contributing to a better understanding of these invariants and their properties when considering the classical Hamming metric. As a comprehensive survey, we cite the book of Firer \textit{et al.} \cite{Firer_book}.

For the distance regular graphs, the balls of radius $e$ are anticodes with diameter larger than $e$ and are optimal anticodes with diameter equal to $2e$ if $C$ is a perfect code.  
For the weighted poset metric spaces, the diameter of the balls of radius $e$ can be equal to $e$. The weighted poset metric is a mix of two extremal cases: the Hamming metric (determined by an anti-chain order and the Hamming weight on coordinates)  with the Niederreiter-Rosenbloom-Tsfasman metric, introduced by Niederreiter in \cite{Niede}, and Rosenbloom and Tsfasman in \cite{RT} (determined by a chain order and the Hamming weight on coordinates; an ultrametric). For the Hamming metric, the diameter of a ball of radius $e$ is greater than $e$. For the Niederreiter-Rosenbloom-Tsfasman metric the diameter of a ball of radius $e$ is exactly equal to $e$.

In this work let us consider the particular setting of weighted poset metrics where the poset is a chain order (Section \ref{sec_NRT}). We will show that the diameter of a ball with radius $e$ is equal to $e$ for all $e$ if and only if the poset is a chain and the weight on coordinates is non-archimedean, the case where the weighted poset metric is an ultrametric (Section \ref{section_nonarchimedian}, Theorem \ref{theo_dimB=D}). In addition, we will describe all optimal anticodes (Section \ref{section_optimalanticodes}, Theorem \ref{lemma_diam_S_w-1} and Theorem \ref{theo2_A*}) and determine for some relevant instances all diameter perfect codes (Section \ref{sec_diam_perf_codes}, Corollary \ref{corol_22}, Corollary \ref{corol_24} and Theorem \ref{theo_26}; Section \ref{sec_diam_per_p_prime}, Theorem \ref{theo_p_prime}).
The Theorem \ref{theo_quando_NRT=Delsate} (Section \ref{section_optimalanticodes}) presents conditions on the weight on coordinates for a code to be diameter perfect. Section \ref{sec_weigths_merrics} is an introduction to weights, metrics, and related concepts used throughout this work. 


\section{Weights, Metrics and Related Concepts}\label{sec_weigths_merrics}

\subsection{Weights and Metrics}\label{sec_peso}

Let $\left[  n\right]  \eqdef\{1,2,\ldots,n\}$ be a finite set with $n$ elements
and $\leq_P$ be a partial order on $\left[  n\right]  $. We call the pair
$P=(\left[  n\right]  ,\leq_P)$ a \emph{poset}. If $k\leq_P j$ and $k\neq j$, we say that $k$ is
\emph{smaller than} $j$ and we write $k <_P j$. An \emph{ideal} in
$P = (\left[  n\right]  ,\leq_P)$ is a subset $I\subseteq\left[  n\right]  $ that
contains every element in $[n]$ smaller than or equal to some of its elements,
i.e., if $j\in I$ and $k\leq_P j$ then $k\in I$. Given a subset $X\subset[n]$,
we denote by $\langle X\rangle$ the smallest ideal containing $X$, called the
\emph{ideal generated} by $X$.

Let $\mathbb{F}_{q}^{n}$ be the space of $n$-tuples over the finite field $\mathbb{F}_{q}$. Given a poset $P=\left(  \left[  n\right]  ,\leq_P\right)  $ and
$u=(u_{1},u_{2},\ldots,u_{n}) \in \mathbb{F}_{q}^n$, the \emph{support} of $u$ is the set
\[
supp(u) \eqdef \left\{  i\in\left[  n\right]
:u_{i}\neq0\right\}.
\]
The ideal  $\langle supp(u) \rangle$ of $P$ is denoted by $I_u^P$ and its set of all maximal elements is denoted by $M_u^P$.

A map $w :\mathbb{F}_q^n \rightarrow \mathbb{N}$ is called a \textit{weight} on $\mathbb{F}_q^n$ if it satisfies the following properties:
\begin{enumerate}
		\item $w(u) \geq 0$ for all $u \in \mathbb{F}_q^n$ and $w(u)=0$ if and only if $u=0$;
		\item $w(u)=w(-u)$ for all $u \in \mathbb{F}_q^n$;
		\item $w(u+v) \leq w(u) + w(v)$ hold for all $u,v \in \mathbb{F}_q^n$ (\textit{triangle inequality}).
	\end{enumerate}

It is clear that, given a weight $w$ on $\mathbb{F}_q^n$, if we define the map $d : \mathbb{F}_q^n \times \mathbb{F}_q^n \rightarrow \mathbb{N}$ by
\begin{eqnarray*}\label{wpmetric}
d(u,v) \eqdef w(u-v), 
\end{eqnarray*}
 then $d$ is a metric invariant by translations.

\begin{definition}\label{def_wcpw}
Given a poset $P=\left(  \left[  n\right]  ,\leq_P\right)  $ and a weight $w$ on $\mathbb{F}_{q}$, the $\left(  P,w\right)  $\emph{-weight} of $u \in \mathbb{F}_q^n$ is the non-negative integer
\begin{eqnarray*}\label{wpmetric}
\varpi_{(P,w)}(u) \eqdef \sum_{i \in M_u^P}w(u_i) + M_w \cdot |I_u^P \setminus M_u^P| 
\end{eqnarray*}
where $M_w \eqdef \max\{w(\alpha):\alpha \in \mathbb{F}_q\}$. If $u,v \in \mathbb{F}_{q}^{n}$, then their $\left(
P,w\right)  $\emph{-distance} is defined by
\[
d_{\left(  P,w\right)  }\left(  u,v\right)  \eqdef \varpi_{\left(  P,w\right)
}\left(  u-v\right)  .
\]
The $(P,w)$-weight $\varpi_{(P,w)}$ and the $(P,w)$-distance $d_{(P,w)}$ are also called  \emph{weighted poset weight} and \emph{weighted poset distance}, respectively.
\end{definition}

The $(P,w)$-weight is a weight on $\mathbb{F}_{q}^{n}$
 (see \cite{Panek_Pinheiro}, Proposition 7), and therefore the $\left(  P,w\right)$-distance is a metric invariant by translations, which combines and
extends several classic weights of coding theory. 
When the weight $w$ is the Hamming weight, the
$\left(  P,w\right) $-weight is the poset weight
proposed by Brualdi \textit{et al.} in \cite{Bru},
and when the weight $w$ is the Lee weight, the
$\left(  P,w\right) $-weight is the pomset weight proposed by Sudha and Selvaraj in \cite{Sudha} (see \cite{Panek_Pinheiro}, Proposition 11). When $P$ is the antichain order with 
$n$ elements, i.e., $i\leq_P j$ in $P$ if and only if $i=j$, the
$\left(  P,w\right)  $-weight is an additive weight. The diagram in Figure \ref{diagram_metrics} illustrates these facts.

\begin{remark}
 We stress that the triangle inequality can fail if we consider $supp(u)$ instead of $M_u^P$ in the definition of $\varpi_{\left(  P,w\right)}(u)$: for the usual order $1 \leq_P 2$ on $[2]$ and the Lee weight $w_L$ on $\mathbb{F}_7$, if $u = (1,0), v = (-1,1) \in \mathbb{F}_7^2$, then $\varpi_{\left(  P,w_L\right)
}\left(  u+v\right) \leq \varpi_{\left(  P,w_L\right)
}\left(  u\right) + \varpi_{\left(  P,w_L\right)
}\left(  v\right)$ if and only if $M_{w_L} \leq 2w_L(1)$, which is not true since $M_{w_L} = 3$ and $w_L(1)=1$.
\end{remark}

\begin{figure*}[!t]		\[
		\xymatrix{
				  & *+[F-:<3pt>]{\small\txt{weighted poset weight}} \ar@{->}[d]|-{\tiny\text{anti-chain}} \ar@{->}[dr]|-{\tiny\text{Lee weight}}  &   \\
				  *+[F-:<3pt>]{\small\txt{poset weight}} \ar@{<-}[ur]|-{\tiny\text{Hamming weight}} & *+[F-:<3pt>]{\small\txt{additive weight}}\ar@{->}[dr]\ar@{->}[dl] &  *+[F-:<3pt>]{\small\txt{pomset weight}} \ar@/^1.3cm/[ll]^{\scriptsize\txt{$\mathbb{F}_2$ and $\mathbb{F}_3$ only}}\\
				  *+[F-:<3pt>]{\small\txt{Hamming weight}}\ar@{<-}[u]|-{\tiny\text{anti-chain}}  &  & *+[F-:<3pt>]{\small\txt{Lee weight}}\ar@{<-}[u]|-{\tiny\text{anti-chain}}  \\
				  &  *+[F-:<3pt>]{\small\txt{Hamming and Lee weights \\ on $\mathbb{F}_2$ and $\mathbb{F}_3$}}\ar@{<-}[ur]\ar@{<-}[ul] & \\
			}
		\]
	\caption{A diagram of weights.}
	\label{diagram_metrics}
\end{figure*}
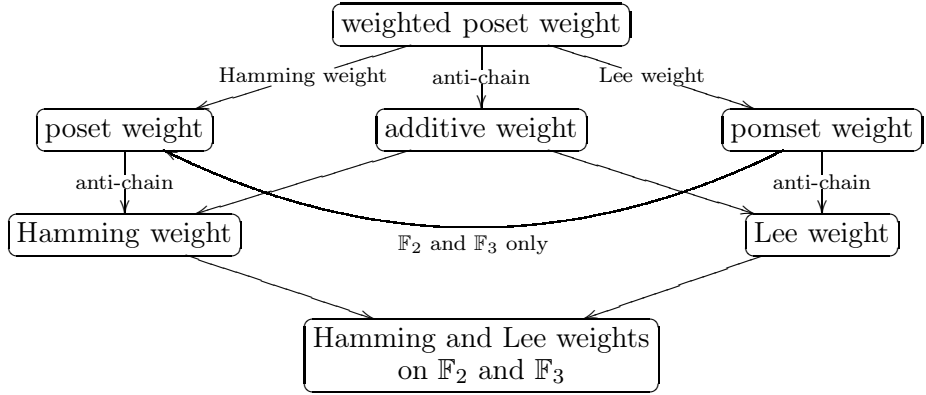


\subsection{Anticodes and Diameter Perfect Codes}\label{sec_anti_diam}

Let $A$ be a subset of the weighted poset metric space $(\mathbb{F}_q^n,d_{(P,w)})$. The \emph{diameter} of $A$ is the maximum $(P,w)$-distance occurring between elements of $A$: $$diam_{d_{(P,w)}}(A) \eqdef \max \{d_{(P,w)}(x,y) : x,y \in A \}.$$ In this case we say that $A$ is an \textit{anticode} with diameter $diam_{d_{(P,w)}}(A)$. Let $$A_{d_{(P,w)}}^*(D) \eqdef \max \{ |A| : diam_{d_{(P,w)}}(A) \leq D \}.$$ An anticode $A$ is called $D$-\emph{optimal} if $|A| = A_{d_{(P,w)}}^*(D)$.

The class of the weighted poset metric spaces $(\mathbb{F}_q^n,d_{(P,w)})$ is distinct from the class of the distance regular graph metric spaces and, thus, Delsarte's Theorem, does not apply. Fortunately, an alternative proof that was given in \cite{Deza_set_antiset} can be slightly modified to work on this case (and for all metrics invariant by translations).

\begin{theorem}\label{theo_Delsart_inv_trans}
	If $A$ and $B$ are subsets of the weighted poset metric space $(\mathbb{F}_q^n,d_{(P,w)})$ such that nonzero $(P,w)$-distances occurring between elements in $A$ do not occur between elements of $B$, then $$|A| \cdot |B| \leq q^n.$$
\end{theorem}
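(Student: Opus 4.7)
The plan is to mimic the classical Deza-type proof for translation-invariant metrics: exhibit an injection from $A\times B$ into $\mathbb{F}_q^n$ via the difference map and use the hypothesis to force injectivity.

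Concretely, I would consider the map
\[
\varphi: A \times B \to \mathbb{F}_q^n, \qquad \varphi(a,b) = a-b,
\]
and show it is injective. Suppose $\varphi(a_1,b_1)=\varphi(a_2,b_2)$, i.e. $a_1-a_2 = b_1-b_2$. Since $d_{(P,w)}$ is translation invariant (a fact already established after the introduction of $\varpi_{(P,w)}$), applying $\varpi_{(P,w)}$ to both sides gives
\[
d_{(P,w)}(a_1,a_2) = \varpi_{(P,w)}(a_1-a_2) = \varpi_{(P,w)}(b_1-b_2) = d_{(P,w)}(b_1,b_2).
\]
Call this common value $\delta$. If $a_1\neq a_2$ then $\delta>0$ is a nonzero distance occurring in $A$, so by hypothesis it cannot occur between elements of $B$; but $d_{(P,w)}(b_1,b_2)=\delta>0$, which would be such an occurrence, a contradiction. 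Hence $a_1=a_2$, and then $b_1-b_2 = a_1-a_2 = 0$ forces $b_1=b_2$. Thus $\varphi$ is injective.

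From injectivity of $\varphi$ I would then conclude
\[
|A|\cdot|B| = |A\times B| = |\varphi(A\times B)| \leq |\mathbb{F}_q^n| = q^n,
\]
which is the desired inequality.

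The argument is essentially straightforward once translation invariance is in hand, so there is no serious technical obstacle. The only subtle point to state clearly is the symmetry of the hypothesis: although the statement is phrased as ``distances from $A$ do not occur in $B$,'' what is actually used is that a single positive value $\delta$ cannot simultaneously be a nonzero distance of a pair in $A$ and of a pair in $B$, which is exactly what the hypothesis provides. I would make this explicit to avoid any confusion about the direction of the implication, and otherwise the proof is a short paragraph.
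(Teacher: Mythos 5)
Your proof is correct and follows essentially the same route as the paper: the paper uses the sum map $(a,b)\mapsto a+b$ and shows $|A+B|=|A|\cdot|B|$ via translation invariance, which is the mirror image of your difference map $\varphi(a,b)=a-b$. The injectivity argument and the use of the hypothesis are identical in substance.
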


\begin{proof}
	Suppose that $A$ and $B$ are subsets of $\mathbb{F}_q^n$ such that nonzero $(P,w)$-distances occurring between elements in $A$ do not occur between elements of $B$. Let $$A+B \eqdef \{a+b : a \in A, \ b \in B\}.$$ Let us show that $|A+B| = |A| \cdot |B|$. Suppose $a_1 + b_1 = a_2 + b_2$ with $a_1,a_2 \in A$ and $b_1,b_2 \in B$. Since $d_{(P,w)}$ is invariant by translations,
\begin{eqnarray*}
	d_{(P,w)}(a_1,a_2) &=& d_{(P,w)}(a_1 + b_1,a_2 + b_1) \\
                       &=& d_{(P,w)}(a_2 +b_2,a_2 + b_1) \\
                       &=& d_{(P,w)}(b_2,b_1).
\end{eqnarray*}
As nonzero $(P,w)$-distances occurring between elements in $A$ do not occur between elements of $B$, it follows that $d_{(P,w)}(a_1,a_2) = d_{(P,w)}(b_2,b_1) = 0$. Therefore $a_1 = a_2$ and $b_1 = b_2$ and so   $|A+B| = |A| \cdot |B|$. Since $A+B \subseteq \mathbb{F}_q^n$, we conclude that $|A| \cdot |B| = |A+B| \leq |\mathbb{F}_q^n| = q^n$.
\end{proof}

A \emph{code} is any subset $C \subseteq \mathbb{F}_{q}^{n}$ with \emph{minimum distance} $$d_{(P,w)}(C) \eqdef \min\{d_{(P,w)}(x,y): x,y \in C,\ \ x \neq y\}.$$

\begin{corollary}
	For all codes $C \subseteq \mathbb{F}_q^n$ with minimum distance $d_{(P,w)}(C) = D+1$ we have
	\begin{equation}\label{diam.perf.x}
	A_{d_{(P,w)}}^*(D) \cdot |C| \leq q^n.
	\end{equation}
\end{corollary}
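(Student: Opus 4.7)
The plan is to derive this corollary as an immediate consequence of Theorem \ref{theo_Delsart_inv_trans} by producing two sets $A$ and $B$ whose nonzero $(P,w)$-distances cannot coincide, and then applying the inequality $|A|\cdot|B|\leq q^n$.

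First I would take $A$ to be any $D$-optimal anticode in $(\mathbb{F}_q^n,d_{(P,w)})$, so that $|A|=A^*_{d_{(P,w)}}(D)$ and every distance $d_{(P,w)}(x,y)$ with $x,y\in A$ lies in the interval $[0,D]$. Next I would set $B=C$, using the hypothesis $d_{(P,w)}(C)=D+1$, which guarantees that every nonzero distance between elements of $C$ is at least $D+1$.

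The key observation is then that the set of nonzero distances occurring in $A$ is contained in $\{1,2,\ldots,D\}$, while the set of nonzero distances occurring in $B=C$ is contained in $\{D+1,D+2,\ldots\}$. These two sets are disjoint, so the hypothesis of Theorem \ref{theo_Delsart_inv_trans} is satisfied. Applying that theorem yields
\[
A^*_{d_{(P,w)}}(D)\cdot |C| = |A|\cdot |B| \leq q^n,
\]
which is exactly the desired bound (\ref{diam.perf.x}).

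The existence of a $D$-optimal anticode is not an obstacle because $\mathbb{F}_q^n$ is finite and the diameter function takes only finitely many values, so the maximum defining $A^*_{d_{(P,w)}}(D)$ is attained. No further case analysis is needed; the only subtlety is ensuring that $A$ and $B$ are chosen so that the ``nonzero distances do not overlap'' hypothesis is satisfied strictly — this is secured by the strict separation $D<D+1$ between the diameter of $A$ and the minimum distance of $C$.
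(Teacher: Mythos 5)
Your proof is correct and is exactly the intended argument: the paper states this corollary without a separate proof precisely because it follows from Theorem \ref{theo_Delsart_inv_trans} by taking $A$ a $D$-optimal anticode (all nonzero distances in $\{1,\ldots,D\}$) and $B=C$ (all nonzero distances at least $D+1$), so the hypothesis of disjoint nonzero distance sets is satisfied. No gaps.
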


A code $C\subseteq \mathbb{F}_q^n$ is called \textit{diameter perfect} if (\ref{diam.perf.x}) holds with equality.

In this work, we will study the diameter perfect codes and optimal anticodes on $(\mathbb{F}_q^n,d_{(P,w)})$ when $P$ is a chain order. We start by presenting the basic results of this weighted poset metric spaces.


\section{Codes and Anticodes on Chain Orders}\label{sec_NRT}

An order $P$ on the finite set $[n]$ is called a \emph{chain} if every two elements are \emph{comparable}, i.e., given $i,j \in \left[  n\right]$ we have that either $i \leq_P j$ or $j \leq_P i$. In this case, the set can be labeled in such a way
that $i_1 \leq_P i_2 \leq_P \ldots \leq_P i_n$. For the simplicity of the notation, in this situation, we will always assume that the chain order $P$ is the usual chain order $$1 \leq 2 \leq \ldots \leq n.$$

For the usual chain order $P = ([n],\leq)$ we have that $| \left\langle i \right\rangle | = i$ for each $i \in [n]$. So, given $0 \neq u = (u_1, \ldots , u_n) \in \mathbb{F}_q^n$,
\begin{equation}\label{peso_NRT}
\varpi_{(P,w)}(u) =  w(u_i) + (i - 1) \cdot M_w,
\end{equation}
where $i = \max\{j : u_j \neq 0\}$.
The metric space $(\mathbb{F}_q^n , d_{(P,w)})$ will be called \textit{weighted chain metric space}.  We remark that the weighted chain metric space is a particular case (a single chain) of the Niederreiter-Rosenbloom-Tsfasman spaces (several chains) if $w$ is the Hamming weight. Originally, the Niederreiter-Rosenbloom-Tsfasman spaces were introduced by Niederreiter in \cite{Niede} and Rosenbloom and Tsfasman in \cite{RT}. The Niederreiter-Rosenbloom-Tsfasman spaces are of special interest since they have several applications, as noted by Rosenbloom and Tsfasman (see \cite{RT}) and Park and Barg (see \cite{Barg}).

In this section, we will away assume that $P = ([n],\leq)$ is the usual chain order and develop several results on codes and anticodes. Besides, we will omit the index $P$ and write just $d_w$ and $\varpi_w$ for the \emph{weighted chain distance} $d_{(P,w)}$ and \emph{weighted chain weight} $\varpi_{(P,w)}$, respectively. In particular, if $w$ is the Hamming weight, we will write just $d$ and $\varpi$, the poset distance and poset weight, respectively.

We will denote by $B_w(u,r)$ the \textit{$r$-ball} with center $u$ and radius $r$ relative to the weighted chain distance $d_w$, i.e.,
\[
B_w(u,r) \eqdef \{v \in \mathbb{F}_q^n : d_w(u,v) \leq r \}.
\]

\begin{remark}
Writing $r = s + i M_w$ with $0 \leq s < M_w$, if $s \neq w(a)$ for all $a \in \mathbb{F}_q$, then $$B_w(u,r) = B_w(u,t + i M_w)$$ for all $s' \leq t \leq s$ where $s'$ is the largest integer such that $s' < s$ and $s' = w(a)$ for some $a \in \mathbb{F}_q$.
\end{remark}

From now on we will assume that $r = s + i M_w$ with $s = w(a)$ for some $a \in \mathbb{F}_q$.

Given $r \in \mathbb{N}$ such that $r \geq 1$ and $w$ be a weight on $\mathbb{F}_q$, let $$[r]_w \eqdef \{t \in [r] : t=w(a) \text{ for some } a \in \mathbb{F}_q\}$$ be the $w$-\textit{interval}. We have that $[r]_w = [r] \cap Im(w)$, where $Im(w)$ is the image of the map $w$.

\begin{proposition}[Size of Ball]\label{lemma_bolas}
	Let $D = S + R \cdot M_w$ be a non-negative integer and $x \in \mathbb{F}_q^n$. If $S > 0$, then
	\begin{eqnarray*}
		|B_w(x,D)| = q^R \cdot (1 + |w^{-1}([S]_w)|).
	\end{eqnarray*}
	If $S = 0$, then $$|B_w(x,D)| = q^R.$$
\end{proposition}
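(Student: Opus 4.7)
The plan is to exploit the translation invariance of $d_w$ to write $|B_w(x,D)| = |B_w(0,D)|$, and then count the vectors $v \in \mathbb{F}_q^n$ satisfying $\varpi_w(v) \leq D$ by stratifying them according to the position of their last nonzero entry. For $v \neq 0$, set $m(v) \eqdef \max\{j : v_j \neq 0\}$. Equation (\ref{peso_NRT}) then gives $\varpi_w(v) = w(v_{m(v)}) + (m(v)-1) M_w$, so membership in the ball depends only on $m(v)$ and the value $w(v_{m(v)})$. Recall from the remark preceding the proposition that we may assume $0 \leq S < M_w$.

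The counting splits into four regimes. First, $v = 0$ contributes one element. Second, if $1 \leq m(v) \leq R$, then $\varpi_w(v) \leq M_w + (m(v)-1) M_w \leq R M_w \leq D$, so every such $v$ lies in the ball; the number with a prescribed $m(v) = i$ is $q^{i-1}(q-1)$ (free coordinates below index $i$, a nonzero entry at $i$, zeros above), and summing a geometric series yields $\sum_{i=1}^{R} q^{i-1}(q-1) = q^R - 1$. Third, if $m(v) = R+1$, then $\varpi_w(v) \leq D$ reduces to $w(v_{R+1}) \leq S$, i.e. $w(v_{R+1}) \in [S]_w$; the coordinates $v_1,\ldots,v_R$ are arbitrary, $v_{R+1} \in w^{-1}([S]_w)$, and the remaining coordinates vanish, giving $q^R \cdot |w^{-1}([S]_w)|$ vectors. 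Fourth, if $m(v) \geq R+2$, then $\varpi_w(v) \geq 1 + (R+1) M_w > S + R M_w = D$ because $S < M_w$, so this regime is empty.

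Adding the contributions gives $|B_w(0,D)| = 1 + (q^R - 1) + q^R \cdot |w^{-1}([S]_w)| = q^R \cdot (1 + |w^{-1}([S]_w)|)$ when $S > 0$. When $S = 0$, the set $[0]_w = [0] \cap \mathrm{Im}(w) = \emptyset$, so the third regime contributes nothing and the sum collapses to $q^R$, as claimed. The argument is essentially a bookkeeping exercise once the stratification by $m(v)$ is in place; the only subtle point is ensuring the fourth regime is indeed vacuous, which is exactly where the normalization $0 \leq S < M_w$ is used, so this is the one place that requires care.
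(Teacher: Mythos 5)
Your proof is correct and follows essentially the same route as the paper's: reduce to $B_w(0,D)$ by translation invariance, then classify vectors by the position of the last nonzero coordinate (the paper writes this as $\varpi(x)$) and count each stratum. Your version is just a more finely itemized bookkeeping of the same computation, including the same use of $S < M_w$ to rule out the stratum $m(v) \geq R+2$.
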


\begin{proof}
	See Appendix \ref{appendix_2}.
\end{proof}

\ \

If $X$ is a subset of $\mathbb{F}_q^n$, the \textit{packing radius} $R_w(X)$ is the largest positive integer number $r$ such that any two $r$-balls centered at distinct elements of $X$ are disjoint. In \cite[Corollary 22]{Panek_Pinheiro} the authors show that
\begin{equation}\label{packing_radius}
	R_w(X) = M_w \cdot (d(C)-1),
\end{equation}
where $d(C)$ is the \emph{minimum distance} of $C$ relative to the poset metric $d$.

We say that a code $C$ is \textit{perfect} if the union of the $r$-balls, $r=R_w(C)$, centered at the elements of $C$, covers $\mathbb{F}_q^n$.

\subsection{MDS Codes}

In this section, we describe all the MDS codes on $(\mathbb{F}_q^n , d_w)$. We shall see later (Section \ref{sec_diam_perf_codes}) that these will be the only diameter perfect codes for many instances of $(\mathbb{F}_q^n , d_w)$. We begin by showing that the minimum distance $d_w(C)$ is determined by $d(C)$.

Given a weight $w$ on $\mathbb{F}_q$, let $m_w \eqdef \min\{w(\alpha): 0 \neq \alpha \in \mathbb{F}_q\}$.

\begin{proposition}\label{lemma_dist_min_geral}
	Let $x_i$ be the $i^{th}$ coordinate of $x \in \mathbb{F}_q^n$ and $C$ be a code on $\mathbb{F}_q^n$. Then
\begin{equation*}
  d_w(C) = S_{w,C} + (d(C) - 1) \cdot M_w,
\end{equation*}
\end{proposition}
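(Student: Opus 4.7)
The plan is to read off the minimum weighted chain distance directly from the explicit formula (\ref{peso_NRT}). For any nonzero $u \in \mathbb{F}_q^n$, let $i(u) \eqdef \max\{j : u_j \neq 0\}$; then (\ref{peso_NRT}) reads $\varpi_w(u) = w(u_{i(u)}) + (i(u)-1) M_w$. Observe that for the usual chain order, the poset (Hamming-based) weight satisfies $\varpi(u) = i(u)$, so $d(C) = \min\{i(x-y) : x,y \in C,\ x \neq y\}$. Computing $d_w(C) = \min \varpi_w(x-y)$ therefore decomposes naturally into two nested minimizations: first over the last nonzero position $i(x-y)$, then over $w((x-y)_{i(x-y)})$ for that position fixed.

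First I would separate the pairs $x \neq y$ in $C$ into those with $i(x-y) = d(C)$ and those with $i(x-y) > d(C)$. Using the bounds $m_w \leq w(\alpha) \leq M_w$ for $\alpha \neq 0$, a pair with $i(x-y) > d(C)$ satisfies
$$\varpi_w(x-y) \geq m_w + d(C) \cdot M_w,$$
whereas a pair with $i(x-y) = d(C)$ satisfies
$$\varpi_w(x-y) \leq M_w + (d(C)-1) M_w = d(C) \cdot M_w.$$
Because $m_w \geq 1 > 0$, the first quantity strictly exceeds the second, so the overall minimum $d_w(C)$ is attained only at pairs with $i(x-y) = d(C)$.

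Restricting to this case, the summand $(d(C)-1) M_w$ is fixed, and minimizing $\varpi_w(x-y)$ is equivalent to minimizing $w((x-y)_{d(C)})$. Setting
$$S_{w,C} \eqdef \min \{ w((x-y)_{d(C)}) : x,y \in C,\ x \neq y,\ i(x-y) = d(C) \},$$
we obtain the stated identity $d_w(C) = S_{w,C} + (d(C)-1) \cdot M_w$. The only delicate step is the case split above: one must invoke $m_w > 0$ explicitly to rule out that a difference with longer support but a very small terminal coordinate sneaks in below the minimum attained on differences with $i(x-y) = d(C)$. Once that strict inequality is in hand, the rest is a direct substitution in formula (\ref{peso_NRT}).
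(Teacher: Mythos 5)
Your argument is correct and follows essentially the same route as the paper's own proof: both rest on formula (\ref{peso_NRT}) together with the observation that, since $m_w>0$, any difference whose last nonzero coordinate lies strictly above position $d(C)$ has $\varpi_w$-weight at least $m_w+d(C)\cdot M_w>d(C)\cdot M_w$, and hence cannot realize the minimum; the minimizing pairs are therefore exactly those with $i(x-y)=d(C)$, and one then minimizes $w((x-y)_{d(C)})$ over them. The one substantive point worth flagging is your definition of $S_{w,C}$: you minimize only over pairs with $i(x-y)=d(C)$, whereas the paper minimizes over all pairs with $x_{d(C)}\neq y_{d(C)}$, with no constraint on the coordinates above $d(C)$. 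These two minima do not coincide in general, and your restricted version is the one the argument actually delivers (and the one that is true). For example, take the Lee weight $w_L$ on $\mathbb{F}_5$ (so $m_w=1$, $M_w=2$), $n=2$ with the chain $1\leq 2$, and $C=\{(0,0),(2,0),(1,1)\}$. Then $d(C)=1$ and $d_w(C)=\varpi_w((2,0))=2$, while the paper's $S_{w,C}=\min\{w_L(2),w_L(1),w_L(1)\}=1$, so the identity with the unrestricted minimum would give $d_w(C)=1$. The offending pair $(0,0),(1,1)$ differs in coordinate $1$ but has $i(x-y)=2$, and must be excluded exactly as your proof does. (The consequence for linear codes, $d_w(C)=m_w+(d(C)-1)\cdot M_w$, is unaffected, since for linear $C$ both versions of the minimum equal $m_w$.) So your proof is not merely adequate; it quietly repairs an imprecision in the statement.
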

where $S_{w,C} \eqdef \min\{w(x_{d(C)} - y_{d(C)}) : x,y \in C, \ x_{d(C)}\neq y_{d(C)}\}$. Therefore, if $C$ is a linear code, then $$d_w(C) = m_w + (d(C) - 1) \cdot M_w.$$

\begin{proof}
	Write $d_w(C) = S + R \cdot M_w$ with $m_w \leq S \leq M_w$. This implies that $d(c,c') \geq R + 1$ for all $c \neq c'\in C$. Since $d(c,c') \geq d(C)$ for all $c \neq c'\in C$ and there are $c,c'\in C$ such that $d(c,c') = d(C)$, we conclude that $R + 1 = d(C)$, i.e., $R = d(C) - 1$. The minimality of $d_w(C)$ implies that $S = \min\{w(x_{d(C)} - y_{d(C)}) : x,y \in C, \ x_{d(C)} \neq y_{d(C)}\}$.
\end{proof}

\ \

A code $C \subseteq \mathbb{F}_q^n$ is said to be \emph{maximum distance separable} (MDS) if its size $|C|$ attains the \emph{Singleton bound} $$|C| \leq q^{n-d(C)+1}.$$ Proceeding with the same argument in the proof of Singleton bound (see \cite{Firer_book}, p. 63), we get: if $C$ is an MDS code on $\mathbb{F}_q^n$, then $$C = \{(x_y,y) : y \in \mathbb{F}_q^{n-d(C)+1}\},$$ where $y \mapsto x_y$ is a map from $\mathbb{F}_q^{n-d(C)+1}$ into $\mathbb{F}_q^{d(C)-1}$. We now notice the following: since  $|B_w(0,R_w(C))|=q^{d(C)-1}$ (see (\ref{packing_radius})),
\begin{eqnarray*}
	|C| = q^{n-d(C)+1} &\Leftrightarrow & q^{d(C)-1} \cdot |C| = q^n \\ 
	                   &\Leftrightarrow & |B_w(0,R_w(C))| \cdot |C| = q^n.
\end{eqnarray*}
In short:

\begin{theorem}\label{theo_MDS_perfect}
	In the weighted chain metric space $(\mathbb{F}_q^n , d_w)$, a code $C$ is MDS if and only if $C$ is perfect. Furthermore, if $C \subseteq \mathbb{F}_q^n$ is MDS (perfect), then $$C = \{(x_y,y) : y \in \mathbb{F}_q^{n-d(C)+1}\},$$ where $y \mapsto x_y$ is a map from $\mathbb{F}_q^{n-d(C)+1}$ into $\mathbb{F}_q^{d(C)-1}$.
\end{theorem}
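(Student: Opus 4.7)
The plan is to combine Proposition~\ref{lemma_bolas} with the chain of equivalences displayed immediately above the statement. Applying Proposition~\ref{lemma_bolas} to $R_w(C) = M_w \cdot (d(C)-1)$, which has the form $S + R \cdot M_w$ with $S = 0$ and $R = d(C)-1$, yields $|B_w(x,R_w(C))| = q^{d(C)-1}$ for every $x \in \mathbb{F}_q^n$, so every ball of radius $R_w(C)$ has the same cardinality.

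For the equivalence MDS $\iff$ perfect, I would argue that by the very definition of the packing radius the balls $B_w(c,R_w(C))$ with $c \in C$ are pairwise disjoint, so their union has cardinality $|C| \cdot q^{d(C)-1}$. Hence $C$ is perfect (its balls of radius $R_w(C)$ cover $\mathbb{F}_q^n$) if and only if $|C| \cdot |B_w(0,R_w(C))| = q^n$, which by the chain of equivalences already displayed is the same as $|C| = q^{n-d(C)+1}$, i.e.\ $C$ is MDS.

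For the structural description, I would use the projection $\pi : \mathbb{F}_q^n \to \mathbb{F}_q^{n-d(C)+1}$ onto the last $n-d(C)+1$ coordinates. Suppose $c \neq c'$ in $C$ satisfied $\pi(c) = \pi(c')$; then the support of $c-c'$ is a nonempty subset of $\{1,\ldots,d(C)-1\}$, and formula~(\ref{peso_NRT}) together with $w(\cdot) \leq M_w$ yields $\varpi_w(c-c') \leq (d(C)-1) \cdot M_w$. By Proposition~\ref{lemma_dist_min_geral}, however, $d_w(C) \geq m_w + (d(C)-1) \cdot M_w > (d(C)-1) \cdot M_w$ since $m_w > 0$, a contradiction. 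Hence $\pi|_C$ is injective, and since $|C| = q^{n-d(C)+1} = |\mathbb{F}_q^{n-d(C)+1}|$ it is a bijection, producing the map $y \mapsto x_y$ that returns the first $d(C)-1$ coordinates.

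No step is genuinely difficult: the real content has already been packed into Proposition~\ref{lemma_bolas}, the packing-radius formula~(\ref{packing_radius}), and Proposition~\ref{lemma_dist_min_geral}. The one point to be careful about is the definition of perfection adopted in the paper (covering rather than partition); since disjointness is automatic from the definition of $R_w(C)$, covering is equivalent to the exact count $|C|\cdot|B_w(0,R_w(C))|=q^n$, and the identification with MDS then drops out of the ball-size computation.
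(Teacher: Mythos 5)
Your proof is correct and follows essentially the same route as the paper: the ball-size computation $|B_w(0,R_w(C))|=q^{d(C)-1}$ combined with the disjointness coming from the packing radius gives MDS $\Leftrightarrow$ perfect, and the injectivity of the projection onto the last $n-d(C)+1$ coordinates gives the structural form. The only difference is that you spell out the projection argument that the paper delegates to the standard Singleton-bound proof, which is a harmless (and welcome) expansion of the same idea.
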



\subsection{Diameter Perfect Codes}\label{sec_diam_perf_codes}

In this section we will show that all MDS codes on $(\mathbb{F}_q^n , d_w)$ are diameter perfect. We will also present some instances where (\ref{diam.perf.x}) is equivalent to the Singleton bound. Consequently, in these instances, diameter perfect and MDS are equivalent concepts.

Let $A \subseteq \mathbb{F}_q^n$ such that $diam_{d_w}(A) \leq D$. Since $d_w(x,y) \leq D$ for all $x,y \in A$, we have that
\begin{equation}\label{A_contido_B(x,D)}
	A \subseteq B_w(x,D)
\end{equation}
for each $x \in A$. Hence $|A| \leq |B_w(x,D)|$. Therefore,
\begin{equation}\label{A*_menor_B(x,D)}
	A^*_{d_w}(D) \leq |B_w(x,D)|,
\end{equation}
by Lemma \ref{lemma_bolas} it follows that:

\begin{theorem}\label{lemma_bould_A^*}
	Let $D = S + R \cdot M_w$ be a non-negative integer with $0 \leq S < M_w$. If $S>0$, then
\begin{equation}\label{bound_A^*}
  A_{d_w}^*(D) \leq q^R \cdot (1 + |w^{-1}([S]_w)|).
\end{equation}
If $S = 0$, then $$A_{d_w}^*(D) \leq q^{R}.$$
Consequently, $A_{d_w}^*(D) < q^{R+1}$.
\end{theorem}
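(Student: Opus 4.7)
The plan is to exploit translation invariance of $d_w$ to conclude that any anticode of diameter at most $D$ fits inside some ball of radius $D$, and then apply the exact cardinality formula from Proposition \ref{lemma_bolas}.

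Concretely, for any $A \subseteq \mathbb{F}_q^n$ with $diam_{d_w}(A) \leq D$ and any $x \in A$, one has $d_w(x, y) \leq D$ for every $y \in A$, hence $A \subseteq B_w(x, D)$; this is the containment already recorded as (\ref{A_contido_B(x,D)}) in the excerpt. Taking cardinalities and maximising over $A$ gives $A_{d_w}^*(D) \leq |B_w(x, D)|$, which is (\ref{A*_menor_B(x,D)}). Substituting the closed form of $|B_w(x, D)|$ supplied by Proposition \ref{lemma_bolas} in the two cases $S > 0$ and $S = 0$ then yields the two displayed bounds immediately.

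For the final assertion $A_{d_w}^*(D) < q^{R+1}$, the case $S = 0$ is trivial. In the case $S > 0$ the task reduces to showing $|w^{-1}([S]_w)| \leq q - 2$, which is the only step that requires a (modest) argument and therefore the main---if rather small---obstacle. I would exhibit two distinct elements of $\mathbb{F}_q$ lying outside $w^{-1}([S]_w)$: the element $0$, since $w(0) = 0 \notin [S]$ whenever $S > 0$; and some $\alpha^* \in \mathbb{F}_q$ with $w(\alpha^*) = M_w$, which exists by the definition of $M_w$ and lies outside $w^{-1}([S]_w)$ because $M_w > S$ under the hypothesis $S < M_w$. This forces $1 + |w^{-1}([S]_w)| \leq q - 1$, and hence $A_{d_w}^*(D) \leq q^R (q - 1) < q^{R+1}$. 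Overall, the argument is routine once Proposition \ref{lemma_bolas} is in hand, and the two excluded elements $0$ and $\alpha^*$ are precisely where both hypotheses $S > 0$ and $S < M_w$ enter.
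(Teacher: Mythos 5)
Your proof is correct and follows exactly the paper's route: the containment $A \subseteq B_w(x,D)$ for $x \in A$ gives $A_{d_w}^*(D) \leq |B_w(x,D)|$, and Proposition \ref{lemma_bolas} supplies the two bounds. Your explicit justification of the final inequality $A_{d_w}^*(D) < q^{R+1}$ (exhibiting $0$ and an element of weight $M_w$ outside $w^{-1}([S]_w)$) is a valid filling-in of a step the paper leaves implicit.
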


In Section \ref{section_nonarchimedian} we will show that (\ref{bound_A^*}) holds with equality if and only if the weight $w$ is non-archimedean.

\begin{theorem}\label{theo_1_appendix}
	Let $D = \varpi_w(y)$ for some $y \in \mathbb{F}_q^n$. The following properties are equivalent:
	\begin{enumerate}
		\item $diam_{d_w}(B_w(x,D)) = D$ for all $x \in \mathbb{F}_q^n$;
		\item $B_w(x,D)$ is $D$-optimal for all $x \in \mathbb{F}_q^n$;
		\item If $A$ is $D$-optimal, then $A = B_w(x,D)$ for some $x \in \mathbb{F}_q^n$;
		\item $A_{d_w}^*(D) =  |B_w(0,D)|$.
	\end{enumerate}
\end{theorem}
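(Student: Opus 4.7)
The plan is to establish the four statements are equivalent via the cyclic chain $(1) \Rightarrow (2) \Rightarrow (4) \Rightarrow (3) \Rightarrow (1)$. Two ingredients will be used throughout: the translation invariance of $d_w$, which guarantees that $|B_w(x,D)|$ and $diam_{d_w}(B_w(x,D))$ are independent of $x$; and the hypothesis $D = \varpi_w(y)$, which places $0$ and $y$ in $B_w(0,D)$ at distance exactly $D$, so that $diam_{d_w}(B_w(0,D)) \geq D$ is available for free.

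For $(1) \Rightarrow (2)$ I would argue directly: under $(1)$ each $B_w(x,D)$ is itself an anticode of diameter $\leq D$, hence $A^*_{d_w}(D) \geq |B_w(x,D)|$, while the reverse inequality is exactly (\ref{A*_menor_B(x,D)}). The step $(2) \Rightarrow (4)$ is then just specialization to $x=0$, combined with translation invariance.

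For $(4) \Rightarrow (3)$, I would fix an arbitrary $D$-optimal anticode $A$, which is nonempty since $A^*_{d_w}(D) \geq |B_w(0,D)| \geq 1$. Picking any $x \in A$, inclusion (\ref{A_contido_B(x,D)}) gives $A \subseteq B_w(x,D)$; translation invariance together with $(4)$ forces $|B_w(x,D)| = |B_w(0,D)| = A^*_{d_w}(D) = |A|$, so $A = B_w(x,D)$.

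The step $(3) \Rightarrow (1)$ is the one that genuinely uses the hypothesis $D = \varpi_w(y)$, and is the step I expect to require the most care. First I would invoke the existence of a $D$-optimal anticode $A$, which is immediate from the finiteness of $\mathbb{F}_q^n$. By $(3)$, $A = B_w(x,D)$ for some $x$, so $diam_{d_w}(B_w(x,D)) = diam_{d_w}(A) \leq D$, and by translation invariance this bound extends to every $x$. The matching lower bound is exactly where the hypothesis enters: both $0$ and $y$ lie in $B_w(0,D)$, and $d_w(0,y) = \varpi_w(y) = D$, whence $diam_{d_w}(B_w(0,D)) \geq D$. Combining the two bounds closes the cycle. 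No serious obstacle is anticipated; the subtlety is simply to notice that the assumption $D = \varpi_w(y)$ is used only to certify that $D$ is actually attained as a distance inside the ball, without which $(3)$ would not imply $(1)$.
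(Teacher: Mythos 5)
Your proof is correct and follows essentially the same route as the paper's: both rest on translation invariance of $d_w$, the containment $A \subseteq B_w(x,D)$ for $x \in A$ from (\ref{A_contido_B(x,D)}), and the hypothesis $D = \varpi_w(y)$ to certify that the diameter of the ball is actually attained. The only difference is the cyclic order of the implications ($(1)\Rightarrow(2)\Rightarrow(4)\Rightarrow(3)\Rightarrow(1)$ versus the paper's $(1)\Rightarrow(2)\Rightarrow(3)\Rightarrow(4)\Rightarrow(1)$), which is immaterial.
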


\begin{proof}
	(1) $\Rightarrow$ (2): Let us suppose that $diam_{d_w}(B_w(x,D)) = D$ for all $x \in \mathbb{F}_q^n$. Since $A \subseteq B_w(x,D)$ for all $x \in A$ whenever $diam_{d_w}(A) \leq D$, we get $|B_{w}(x,D)| = A_{d_w}^*(D)$ for all $x \in \mathbb{F}_q^n$. Hence $B_{w}(x,D)$ is $D$-optimal for all $x \in \mathbb{F}_q^n$.
	
	(2) $\Rightarrow$ (3): Now suppose that $B_{w}(x,D)$ is $D$-optimal for all $x \in \mathbb{F}_q^n$ and let $A$ be an anticode such that $|A| = A_{d_w}^*(D)$, i.e., $A$ is $D$-optimal.  Since $A \subseteq B_w(x,D)$ for all $x \in A$ and $|B_{w}(x,D)| = A_{d_w}^*(D)$, then $|A| = |B_{w}(x,D)|$ for all $x \in A$. Thus $A = B_{w}(x,D)$ for all $x \in A$.
	
	(3) $\Rightarrow$ (4): Suppose that $A$ is $D$-optimal and $A = B_{w}(x,D)$ for some $x \in \mathbb{F}_q^n$. Then $A_{d_w}^*(D) = |A| = |B_{w}(x,D)|$. Since $d_w$ is invariant by translations, $|B_{w}(x,D)| = |B_{w}(0,D)|$. Thus $A_{d_w}^*(D) = |B_{w}(0,D)|$.
	
	(4) $\Rightarrow$ (1): Let us suppose that $A_{d_w}^*(D) =  |B_{w}(0,D)|$. If $A$ is $D$-optimal, then $|A| = |B_{w}(0,D)|$. Since $A \subseteq B_w(x,D)$ for all $x \in A$ and $d_w$ is invariant by translation, $|A| = |B_{w}(0 ,D)| = |B_{w}(x ,D)|$ for all $x \in A$, and hence $A = B_w(x,D)$ for all $x \in A$.  This implies that $B_{w}(x,D)$ is $D$-optimal. Therefore $diam_{d_w}(B_{w}(x,D)) \leq D$. As $D = \varpi_w(y)$ for some $y \in \mathbb{F}_q^n$ and $d_w$ is invariant by translation, $D = d_w(x,x+y)$, which implies that $diam_{d_w}(B_{w}(x,D)) = D$ for all $x \in A$. Thus $diam_{d_w}(B_{w}(x,D)) = D$ for all $x \in \mathbb{F}_q^n$.
\end{proof}

\begin{proposition}\label{corol_B(0,D)_subspace}
	Let $D = R \cdot M_w$ and $x \in \mathbb{F}_q^n$. The following properties holds (and are equivalent):
\begin{enumerate}
	\item $diam_{d_w}(B_w(x,D)) = D$;
    \item $B_w(x,D)$ is $D$-optimal;
    \item $B_w(0,D)$ is an $R$-dimensional subspace of $\mathbb{F}_q^n$.
\end{enumerate}
Consequently, $A_{d_w}^*(D) = q^R$.
\end{proposition}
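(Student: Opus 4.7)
My plan is to establish (3) by an explicit computation of $B_w(0, D)$ from the weight formula, deduce (1) from (3) via translation invariance, and then close the loop using Theorem \ref{theo_1_appendix} to obtain (2) and $A_{d_w}^*(D) = q^R$ simultaneously. I will assume $R \ge 1$, since $R = 0$ forces $D = 0$ and the proposition is trivial.

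For (3), I would take a nonzero $v \in \mathbb{F}_q^n$ and let $i = \max\{j : v_j \neq 0\}$; formula (\ref{peso_NRT}) then gives $\varpi_w(v) = w(v_i) + (i-1)M_w$. Since $1 \le w(v_i) \le M_w$, the inequality $\varpi_w(v) \le R M_w$ holds precisely when $i \le R$: if $i \le R$ then $\varpi_w(v) \le M_w + (R-1)M_w = D$, whereas if $i > R$ then $\varpi_w(v) \ge 1 + RM_w > D$. Hence I will be able to write
$$B_w(0, D) = \{v \in \mathbb{F}_q^n : v_j = 0 \text{ for all } j > R\},$$
which is an $R$-dimensional coordinate subspace of $\mathbb{F}_q^n$, proving (3). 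For (1), translation invariance gives $B_w(x, D) = x + B_w(0, D)$, so the set of differences $u - v$ with $u, v \in B_w(x, D)$ ranges over $B_w(0, D)$, and therefore $diam_{d_w}(B_w(x, D)) = \max\{\varpi_w(z) : z \in B_w(0, D)\}$. Choosing $\alpha \in \mathbb{F}_q$ with $w(\alpha) = M_w$ and setting $z_R = \alpha$ with all other coordinates zero, I get $z \in B_w(0, D)$ with $\varpi_w(z) = M_w + (R-1)M_w = D$, so the maximum equals $D$ and (1) follows.

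In the final step I will invoke Theorem \ref{theo_1_appendix}, which applies because $D = \varpi_w(z)$ for the $z$ just built. Its equivalence (1) $\Leftrightarrow$ (2) upgrades our (1) to our (2), and its equivalence (1) $\Leftrightarrow$ (4) gives $A_{d_w}^*(D) = |B_w(0, D)| = q^R$, where the cardinality $q^R$ is immediate from (3). All three of our conditions will then have been established, so they are equivalent in the trivial sense that each holds. I do not anticipate a real obstacle; the only point worth stressing is the coincidence, special to $D = R M_w$, that $B_w(0, D)$ happens to be an honest subspace rather than merely a ball, which makes (1) and (2) follow almost automatically from (3).
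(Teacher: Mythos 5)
Your proposal is correct and follows essentially the same route as the paper: identify $B_w(0,D)$ as the coordinate subspace $\{(x_1,\ldots,x_R,0,\ldots,0)\}$, deduce $diam_{d_w}(B_w(x,D))=D$ from the subspace property and translation invariance, and then invoke Theorem \ref{theo_1_appendix} together with the ball-size formula to obtain $D$-optimality and $A_{d_w}^*(D)=q^R$. The only cosmetic difference is that the paper also spells out the (logically redundant, since all three items hold) implication that $D$-optimality forces $B_w(0,D)$ to be a subspace, which you correctly dismiss as trivial.
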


\begin{proof}
	We have that $$B_w(0,D) = \{(x_1,\ldots,x_R,0,\ldots,0) : x_1,\ldots,x_R \in \mathbb{F}_q\}.$$ Hence $B_w(0,D)$ is an $R$-dimensional subspace of $\mathbb{F}_q^n$. Therefore, $x-y \in B_w(0,D)$ for all $x,y \in B_w(0,D)$ and hence $diam_{d_w}(B_w(0,D)) = D$. Since $d_w$ is invariant by translations, $diam_{d_w}(B_w(x,D)) = D$ for all $x \in \mathbb{F}_q^n$. By Theorem \ref{theo_1_appendix} 
	and Proposition \ref{lemma_bolas} it follows that item 2 and $A_{d_w}^*(D) = q^R$ holds, and hence items 1, 2 and $A_{d_w}^*(D) = q^R$ are equivalent.

Now if $B_w(0,D)$ is $D$-optimal, $diam_{d_w}(B_w(0,D)) \leq D$, and since $D = R \cdot M_w$, we have that $x - \lambda y \in B_w(0,D)$ for all $x,y \in B_w(0,D)$ and $\lambda \in \mathbb{F}_q$, i.e., $B_w(0,D)$ is a subspace of $\mathbb{F}_q^n$. 

Thus the items 1, 2, 3 and $A_{d_w}^*(D) = q^R$ holds and are all equivalent.
\end{proof}

\ \

Let $w$ be a weight on $\mathbb{F}_q$ and $D$ be a non-negative integer. We denote by $\lfloor D \rfloor_w$ the largest weight $w(a)$ such that $w(a) < D$.

As $\lfloor D \rfloor_w = R \cdot M_w$ whenever $D = m_w + R \cdot M_w$, by Proposition \ref{corol_B(0,D)_subspace} it follows that $A_{d_w}^*(\lfloor D \rfloor_w) = q^R$. Hence:

\begin{corollary}\label{corol_22}
	Let $C$ be a code with minimum distance $d_w(C) = m_w + (d(C) - 1) \cdot M_w$. Then
\begin{equation}\label{bound_A*}
	A^*_{d_w}(\lfloor d_w(C) \rfloor_w) \cdot |C| \leq q^n
\end{equation}
is equivalent to the Singleton bound. Therefore, if $C$ is a code with minimum distance $d_w(C) = m_w + (d(C) - 1) \cdot M_w$, then $C$ is diameter perfect if and only if $C$ is MDS.
\end{corollary}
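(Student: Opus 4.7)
The plan is to show that, under the hypothesis $d_w(C) = m_w + (d(C)-1) M_w$, the diameter-perfect bound (\ref{bound_A*}) collapses literally to the Singleton bound, so that the two perfectness notions coincide.

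First I would pin down $\lfloor d_w(C) \rfloor_w$. Every achievable $\varpi_w$-value is, by (\ref{peso_NRT}), of the form $w(a) + (k-1) M_w$ with $a \in \mathbb{F}_q$ and $k \geq 1$, and no such value falls strictly inside the interval $((d(C)-1) M_w,\; m_w + (d(C)-1) M_w)$: taking $k = d(C)$ forces $w(a) < m_w$, hence $w(a) = 0$, giving exactly $(d(C)-1) M_w$; taking $k \leq d(C) - 1$ contributes at most $(d(C)-2) M_w + M_w = (d(C)-1) M_w$. Thus $\lfloor d_w(C) \rfloor_w = (d(C)-1) M_w$, exactly as the paragraph preceding the statement anticipates.

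With this identification, Proposition \ref{corol_B(0,D)_subspace} applied with $R = d(C) - 1$ yields $A^*_{d_w}(\lfloor d_w(C) \rfloor_w) = q^{d(C)-1}$. Substituting into (\ref{bound_A*}) turns it into $q^{d(C)-1} \cdot |C| \leq q^n$, i.e.\ $|C| \leq q^{n - d(C) + 1}$, which is exactly the Singleton bound; so equality in (\ref{bound_A*}) is equivalent to $C$ being MDS. To reconcile this with the official diameter-perfect inequality (\ref{diam.perf.x}), which uses $D = d_w(C) - 1$ instead of $\lfloor d_w(C) \rfloor_w$, I would invoke the remark preceding Proposition \ref{lemma_bolas}: since no achievable distance lies in the half-open interval $[(d(C)-1) M_w + 1,\; d_w(C) - 1]$, the function $A^*_{d_w}$ is constant on $[(d(C)-1) M_w,\; d_w(C) - 1]$, and in particular $A^*_{d_w}(d_w(C) - 1) = A^*_{d_w}(\lfloor d_w(C) \rfloor_w)$. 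Hence (\ref{bound_A*}) is literally the diameter-perfect bound, and the stated equivalence follows.

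There is no serious obstacle; the only care is in the small "gap" computation that identifies $\lfloor d_w(C) \rfloor_w = (d(C)-1) M_w$ and certifies that $A^*_{d_w}$ does not change between this value and $d_w(C) - 1$. Once this is in place, everything reduces to Proposition \ref{corol_B(0,D)_subspace} and the definition of MDS.
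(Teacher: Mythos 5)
Your proof is correct and follows essentially the same route as the paper: identify $\lfloor d_w(C) \rfloor_w = (d(C)-1)\cdot M_w$, invoke Proposition \ref{corol_B(0,D)_subspace} to get $A^*_{d_w}(\lfloor d_w(C) \rfloor_w) = q^{d(C)-1}$, and observe that the bound then reads $|C| \leq q^{n-d(C)+1}$. Your extra step reconciling $A^*_{d_w}(\lfloor d_w(C) \rfloor_w)$ with $A^*_{d_w}(d_w(C)-1)$ from (\ref{diam.perf.x}) is a point the paper leaves implicit, and is a welcome (and correct) addition rather than a departure.
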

 
Let $w_H$ be the Hamming weight on $\mathbb{F}_q$. Since $d_w(C) = d(C) \cdot M_w$ whenever $w = \lambda w_H$ for some integer $\lambda > 0$:

\begin{corollary}\label{corol_24}
	If $w = \lambda w_H$ for some integer $\lambda > 0$, then $$A^*_{d_w}(\lfloor d_w(C) \rfloor_w) \cdot |C| \leq q^n$$ is equivalent to the Singleton bound. Therefore, a code $C$ is diameter perfect if and only if $C$ is MDS.
\end{corollary}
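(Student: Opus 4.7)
The plan is to reduce Corollary \ref{corol_24} to the already established Corollary \ref{corol_22} by verifying that the hypothesis on the minimum distance of $C$ there is automatically satisfied whenever $w = \lambda w_H$.

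First I would unpack what $w = \lambda w_H$ means on the alphabet: we have $w(0) = 0$ and $w(\alpha) = \lambda$ for every nonzero $\alpha \in \mathbb{F}_q$. In particular, $w$ takes only the two values $0$ and $\lambda$, so $m_w = M_w = \lambda$. Applying Proposition \ref{lemma_dist_min_geral}, for any pair $x,y \in C$ with $x_{d(C)} \neq y_{d(C)}$ we have $w(x_{d(C)} - y_{d(C)}) = \lambda$, hence $S_{w,C} = \lambda = m_w$ and
\[
d_w(C) \;=\; S_{w,C} + (d(C)-1) M_w \;=\; m_w + (d(C)-1) M_w \;=\; d(C) \cdot M_w.
\]
This is precisely the hypothesis needed to invoke Corollary \ref{corol_22}.

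Applying Corollary \ref{corol_22} then yields the equivalence of the bound $A^*_{d_w}(\lfloor d_w(C) \rfloor_w) \cdot |C| \leq q^n$ with the Singleton bound $|C| \leq q^{n-d(C)+1}$, so equality in one holds if and only if equality in the other does. The second assertion --- that $C$ is diameter perfect if and only if $C$ is MDS --- follows at once.

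There is no real obstacle here: the proof is essentially a rewriting, and the only point requiring attention is the collapse $m_w = M_w = \lambda$ that makes the special form $m_w + (d(C)-1)M_w$ of the minimum distance hold unconditionally for every code $C$, rather than only for those codes whose minimum distance already happens to be of this shape (as in the ambient setting of Corollary \ref{corol_22}).
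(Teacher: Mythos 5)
Your proposal is correct and matches the paper's own (very terse) justification: the paper simply observes that $d_w(C) = d(C)\cdot M_w$ whenever $w = \lambda w_H$, which is exactly your computation $m_w = M_w = \lambda$ and $S_{w,C} = \lambda$, and then invokes Corollary \ref{corol_22}. You have merely written out the details the paper leaves implicit.
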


In \cite{ahl} Ahlswede \textit{et al.} proved that MDS codes in Hamming spaces are diameter perfect. This is also our case:

\begin{theorem}\label{theo_MDS_is_diamter_perfect}
	If $C$ is an MDS code with minimum distance $d_w(C)=D$, then $$A^*_{d_w}(\lfloor D \rfloor_w) \cdot |C| = q^n.$$
\end{theorem}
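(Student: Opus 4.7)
The plan is to reduce the theorem to Corollary \ref{corol_22}, which asserts that whenever $C$ has minimum distance of the form $d_w(C) = m_w + (d(C) - 1) \cdot M_w$, the diameter bound $A^*_{d_w}(\lfloor d_w(C) \rfloor_w) \cdot |C| \leq q^n$ coincides with the Singleton bound. Consequently, it suffices to verify that every MDS code $C$ satisfies $S_{w,C} = m_w$: its minimum distance will then take the required form, and the MDS property will give Singleton equality and therefore diameter-bound equality.

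For the identity $S_{w,C} = m_w$, I would invoke the explicit parametrization from Theorem \ref{theo_MDS_perfect}: $C = \{(x_y, y) : y \in \mathbb{F}_q^{n-d(C)+1}\}$, with the $y$-block occupying the coordinates $d(C), d(C)+1, \ldots, n$ of $\mathbb{F}_q^n$. I would pick $\alpha \in \mathbb{F}_q$ with $w(\alpha) = m_w$, fix any $y_2 \in \mathbb{F}_q^{n-d(C)+1}$, and set $y_1 = y_2 + \alpha e_1$ (with $e_1$ the first standard basis vector of $\mathbb{F}_q^{n-d(C)+1}$). The difference of the corresponding codewords $c_1 = (x_{y_1}, y_1)$ and $c_2 = (x_{y_2}, y_2)$ vanishes on positions $d(C)+1, \ldots, n$ and equals $\alpha$ on position $d(C)$, so by equation (\ref{peso_NRT}), $d_w(c_1, c_2) = w(\alpha) + (d(C) - 1) \cdot M_w = m_w + (d(C) - 1) \cdot M_w$. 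Combined with the trivial lower bound $S_{w,C} \geq m_w$ (immediate from the definition of $S_{w,C}$ in Proposition \ref{lemma_dist_min_geral}), this forces $S_{w,C} = m_w$.

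With $S_{w,C} = m_w$ in hand, Corollary \ref{corol_22} identifies the diameter bound with the Singleton bound, and the MDS hypothesis gives equality in the latter. The only substantive step is the construction of a pair of codewords witnessing $d_w(C) \leq m_w + (d(C)-1) \cdot M_w$; the MDS parametrization of Theorem \ref{theo_MDS_perfect} makes this essentially immediate, so no further computation is needed.
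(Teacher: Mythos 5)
Your proof is correct and follows essentially the same route as the paper: both reduce to showing $d_w(C) = m_w + (d(C)-1)\cdot M_w$, whence $\lfloor D \rfloor_w = (d(C)-1)\cdot M_w$, $A^*_{d_w}(\lfloor D \rfloor_w) = q^{d(C)-1}$, and the product with $|C| = q^{n-d(C)+1}$ gives $q^n$. The one difference is that the paper simply asserts $D = m_w + (d(C)-1)\cdot M_w$ for MDS codes, whereas you justify it via the parametrization of Theorem \ref{theo_MDS_perfect} — a worthwhile addition, since MDS codes need not be linear and Proposition \ref{lemma_dist_min_geral} only guarantees $S_{w,C} = m_w$ in the linear case.
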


\begin{proof}
	If $C$ is an MDS code with minimum distance $d_w(C)=D$, then $D = m_w +(d(C)-1) \cdot M_w$. This implies that $\lfloor D \rfloor_w = (d(C)-1) \cdot M_w$, and hence $A_{d_w}^*(\lfloor D \rfloor_w) = q^{d(C)-1}$ (see Proposition \ref{corol_B(0,D)_subspace}). Since $|C| = q^{n-d(C)+1}$, the result follows. 
\end{proof}

\ \

The MDS codes (linear and non-linear) are described in Theorem \ref{theo_MDS_perfect}.

Let $C$ be a code on $\mathbb{F}_q^n$ such that $|C| = q^k$ for some $0 \leq k \leq n$. Suppose $d_w(C) = S + (d(C) - 1) \cdot M_w$ with $m_w \leq S < M_w$ be a distance (see Proposition \ref{lemma_dist_min_geral}).  If $C$ is not an MDS code, then $|C| \leq q^{n-d(C)+1-i}$ for some integer $i \geq 1$, and since $A^*_{d_w}(\lfloor D \rfloor_w) < q^{d(C)}$ (see Theorem \ref{lemma_bould_A^*}), we have $$A^*_{d_w}(\lfloor D \rfloor_w) \cdot |C| < q^{n + 1 - i} \leq q^n.$$ Now, if $|C|$ is an MDS code, then $|C| = q^{n - d(C) + 1}$ and $d_w(C) = m_w + (d(C) - 1) \cdot M_w$. If $D = d_w(C)$, by Proposition \ref{corol_B(0,D)_subspace}, $A^*_{d_w}(\lfloor D \rfloor_w) = q^{d(C) - 1}$. Thus $$A^*_{d_w}(\lfloor D \rfloor_w) \cdot |C| = q^n.$$ In short:

\begin{theorem}\label{theo_26}
A code $C$ of size power of $q$ is diameter perfect if and only if $C$ is MDS. In particular, a linear code $C$ is diameter perfect if and only if $C$ is MDS.
\end{theorem}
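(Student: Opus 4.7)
My plan is to prove the two directions separately, noting that the converse is where the power-of-$q$ hypothesis does real work. For the implication MDS $\Rightarrow$ diameter perfect, I would simply invoke Theorem \ref{theo_MDS_is_diamter_perfect}: if $C$ is MDS with $d_w(C) = D$, then $A^*_{d_w}(\lfloor D \rfloor_w) \cdot |C| = q^n$, so $C$ is diameter perfect. This direction needs no extra hypothesis on $|C|$, since every MDS code automatically satisfies $|C| = q^{n-d(C)+1}$, a power of $q$.

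For the converse I would argue by contrapositive. Suppose $|C| = q^k$ for some $0 \leq k \leq n$ and that $C$ is not MDS. Then the Singleton bound $|C| \leq q^{n-d(C)+1}$ is strict, and because $|C|$ is a power of $q$, strictness upgrades to $|C| \leq q^{n-d(C)}$. By Proposition \ref{lemma_dist_min_geral} we can write $d_w(C) = S + (d(C)-1)M_w$ with $m_w \leq S \leq M_w$, so $\lfloor d_w(C) \rfloor_w$ takes the form $S' + (d(C)-1) M_w$ with $0 \leq S' < M_w$. Applying Theorem \ref{lemma_bould_A^*} with $R = d(C)-1$ gives the strict bound $A^*_{d_w}(\lfloor d_w(C) \rfloor_w) < q^{d(C)}$. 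Multiplying the two strict inequalities,
\begin{equation*}
A^*_{d_w}(\lfloor d_w(C) \rfloor_w) \cdot |C| < q^{d(C)} \cdot q^{n-d(C)} = q^n,
\end{equation*}
which contradicts diameter perfection. The linear case then follows at once since every linear code has $|C| = q^{\dim C}$.

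The delicate point of the argument is matching exponents in the two strict inequalities: the Singleton side contributes $q^{n-d(C)}$ — using crucially that $|C|$ being a power of $q$ shaves a whole factor of $q$ off a merely strict Singleton bound — while the anticode side contributes $q^{d(C)}$, which requires identifying $R = d(C)-1$ in the expansion $\lfloor d_w(C) \rfloor_w = S' + R M_w$ needed to apply Theorem \ref{lemma_bould_A^*}. Only because these two exponents sum exactly to $n$ does the strict-strict combination land on the wrong side of the diameter-perfect identity; losing just one unit of strictness from either bound would break the contradiction.
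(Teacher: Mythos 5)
Your proof is correct and follows essentially the same route as the paper: the forward direction is Theorem \ref{theo_MDS_is_diamter_perfect}, and the converse combines the strict anticode bound $A^*_{d_w}(\lfloor d_w(C)\rfloor_w) < q^{d(C)}$ from Theorem \ref{lemma_bould_A^*} with the observation that a non-MDS code of size a power of $q$ satisfies $|C| \leq q^{n-d(C)}$. The paper phrases this as a two-case analysis rather than a contrapositive, but the content is identical.
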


As we shall see in Theorem \ref{lemma_diam_S_w-1} and Theorem \ref{theo2_A*}, not always the inequality (\ref{bound_A*}) is equivalent to the Singleton bound.


\subsection{Optimal Anticodes}\label{section_optimalanticodes}

In this section, we determine all the $D$-optimal anticodes. If $D = S + R \cdot M_w$, the idea is to partition the $w$-interval $[M_w]_w$ in ``non-archimedean'' and ``not always non-archimedean'' elements and analyze $S$ (and consequently $D$) considering this partition. More precisely, if $S$ is in the part ``non-archimedean'' then all $D$-optimal anticodes are balls of radius $D$ (Theorem \ref{lemma_diam_S_w-1}), and if $S$ is in part ``not always non-archimedean'' then all $D$-optimal anticodes are proper parts of balls of radius $D$ (Theorem \ref{theo2_A*}).

We say that a weight $w$ on $\mathbb{F}_q^n$ is \emph{non-archimedean} if
\begin{equation*}
	w(x+y) \leq \max\{w(x),w(y)\}
\end{equation*}
for all $x,y \in \mathbb{F}_q^n$. Otherwise, we will say that the weight is \emph{archimedean}.

\begin{example}
	The Lee weight $w_L$ on $\mathbb{Z}_m$ is archimedean if $m \geq 4$: for $x = y =1$, $$w_L(x+y) > \max\{w(x),w(y)\}.$$ The Hamming weight $w_H$ on $\mathbb{Z}_m$ is non-archimedean.
\end{example}

Given an archimedean weight $w$ on $\mathbb{F}_q$, let $m_w \leq S_w <M_w$ be the integer
\begin{align*}
	 S_w\eqdef \min\{\max\{&w(a),w(b)\} : a,b \in \mathbb{F}_q \textrm{ and } \\ & w(a-b) > \max\{w(a),w(b)\}\}.
\end{align*}
If $w$ is non-archimedean, we define $S_w \eqdef M_w$. Notice that $S_w > 0$ for all weights $w$.

\begin{theorem}\label{lemma_diam_S_w-1}
	Let $(\mathbb{F}_q^n,d_w)$ be the weighted chain metric space and $D$ be a non-negative integer. Write $D = S + R \cdot M_w$ with $0 \leq S < M_w$. If $0 \leq S < S_w$, then:
\begin{enumerate}
	\item $diam_{d_w}(B_w(x,D)) = D$;
    \item $B_w(x,D)$ is $D$-optimal for all $x \in \mathbb{F}_q^n$;
    \item If $A$ is $D$-optimal, then $A = B_w(x,D)$ for some $x \in \mathbb{F}_q^n$;
    \item $B_w(0,D)$ is an $R$-dimensional subspace of $\mathbb{F}_q^n$ if and only if $S=0$.
\end{enumerate}
Consequently:
\begin{enumerate}
	\item[5.] If $S=0$, then $A_{d_w}^*(D) = q^R$;
    \item[6.] If $S \neq 0$, then $A_{d_w}^*(D) =  q^R \cdot \left(1 + |w^{-1}([S]_w)| \right)$.
\end{enumerate}
\end{theorem}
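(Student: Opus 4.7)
The plan is to let Theorem \ref{theo_1_appendix} do most of the organizational work: once item 1 is established, the equivalences there immediately give items 2 and 3, and the combination of property (4) of that theorem with Proposition \ref{lemma_bolas} yields items 5 and 6. Item 4 is handled separately by a cardinality comparison. Since $d_w$ is invariant by translations, throughout I would work with $B_w(0,D)$.

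The case $S=0$ is already settled: with $D = R\cdot M_w$, Proposition \ref{corol_B(0,D)_subspace} supplies items 1, 2, 3, the value $A_{d_w}^*(D) = q^R$, and it identifies $B_w(0,D)$ as an $R$-dimensional subspace, giving item 4 in this subcase. Assume now $0 < S < S_w$. The crux is the following consequence of the definition of $S_w$: for any $a,b \in \mathbb{F}_q$ with $\max\{w(a),w(b)\} \leq S$, one has $w(a-b) \leq \max\{w(a),w(b)\} \leq S$; otherwise the pair $(a,b)$ would witness $\max\{w(a),w(b)\} \geq S_w$, contradicting $S < S_w$.

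Using this, I would bound the diameter of $B_w(0,D)$. Any $u,v \in B_w(0,D)$ must satisfy $u_i = v_i = 0$ for $i > R+1$ (else their weighted chain weight would exceed $R\cdot M_w \geq D$), and the nonzero values among $u_{R+1}, v_{R+1}$ have $w$-weight at most $S$. If $(u-v)_{R+1} \neq 0$, then $\varpi_w(u-v) = w(u_{R+1}-v_{R+1}) + R\cdot M_w \leq S + R\cdot M_w = D$ by the claim above; if $(u-v)_{R+1} = 0$, then the largest index with $(u-v)_i \neq 0$ is at most $R$, so $\varpi_w(u-v) \leq R\cdot M_w \leq D$. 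The diameter is attained, for instance, by $u=0$ and $v$ with unique nonzero coordinate $v_{R+1}$ of weight $S$, which exists by the standing assumption $S \in Im(w)$. This proves item 1, and hence items 2, 3 via Theorem \ref{theo_1_appendix}.

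For item 4 when $S \neq 0$: Proposition \ref{lemma_bolas} gives $|B_w(0,D)| = q^R(1 + |w^{-1}([S]_w)|) > q^R$ because $S \in [S]_w$ forces $|w^{-1}([S]_w)| \geq 1$, so $B_w(0,D)$ cannot be an $R$-dimensional subspace. For items 5 and 6, item 1 together with Theorem \ref{theo_1_appendix}(4) yields $A_{d_w}^*(D) = |B_w(0,D)|$, and evaluating the right-hand side by Proposition \ref{lemma_bolas} gives both formulas at once. The only delicate step is the diameter estimate in the previous paragraph; its core content is simply the contrapositive reading of the definition of $S_w$.
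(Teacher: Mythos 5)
Your proof is correct and follows essentially the same route as the paper's: both reduce items 2, 3, 5 and 6 to the diameter computation of item 1 via Theorem \ref{theo_1_appendix} and Proposition \ref{lemma_bolas}, and both obtain $diam_{d_w}(B_w(0,D)) \leq D$ from the contrapositive of the definition of $S_w$ (a pair with $\max\{w(a),w(b)\} \leq S < S_w$ cannot satisfy $w(a-b) > \max\{w(a),w(b)\}$), with the diameter attained by a vector supported on $\{R+1\}$ of coordinate weight $S$. The one place you diverge is item 4: for $S \neq 0$ you rule out an $R$-dimensional subspace purely by cardinality, $|B_w(0,D)| = q^R\left(1 + |w^{-1}([S]_w)|\right) > q^R$, whereas the paper shows $B_w(0,D)$ fails closure under scalar multiplication (choosing $a$ with $w(a) \leq S$, $b$ with $w(b) > S$, and $\lambda = ba^{-1}$). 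Your version is shorter and suffices for the statement as written; the paper's yields the marginally stronger fact that $B_w(0,D)$ is not a linear subspace of any dimension. One cosmetic slip: in the parenthetical justifying $u_i = 0$ for $i > R+1$ you wrote that the weight ``would exceed $R\cdot M_w \geq D$''; the correct chain is $\varpi_w(u) \geq m_w + (R+1)\cdot M_w > S + R\cdot M_w = D$ (using $S < M_w$), but the conclusion you draw from it is the right one.
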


\begin{proof}
	Since $w(a-b) \leq \max\{w(a),w(b)\}$ if $\max\{w(a),w(b)\} \leq S_w-1$, $a,b \in \mathbb{F}_q$, and since $S<S_w$, it follows that $d_w(x,y) \leq D$ for all $x,y \in B_w(0,D)$. Hence $diam_{d_w}(B_w(0,D)) \leq D$. For $x \in \mathbb{F}_q^n$ such that $\varpi_{P}(x) = R+1$ and $w(x_{R+1}) = S$ we have $d_w(0,x) = D$. Thus $diam_{d_w}(B_w(0,D)) = D$. As $d_w$ is invariant by translations, we get $$diam_{d_w}(B_w(x,D)) = D$$ for all $x \in \mathbb{F}_q^n$. From this and Theorem \ref{theo_1_appendix}, 
	 together with Proposition \ref{lemma_bolas},  we get the items 2, 3, 5 and 6.

Proposition \ref{corol_B(0,D)_subspace} ensures that $B_w(0,D)$ is an $R$-dimensional subspace if $S=0$. Note that if $S_w = m_w$, then $S = 0$. Assume now $m_w \leq S < S_w$. Put $x = (x_1,\ldots,x_{R-1},a,0,\ldots,0)$. If $B_w(0,D)$ is a linear subspace, then $w(\lambda a) \leq S$ for all $\lambda \in \mathbb{F}_q$ and for all $a \in \mathbb{F}_q$ such that $w(a) \leq S$. Since $S < M_w$ there is $b \in \mathbb{F}_q$ such that $w(b) > S$. For $\lambda = ba^{-1}$, we have $w(b) = w(\lambda a) \leq S$, a contradiction. Thus, if $B_w(0,D)$ is a linear subspace of $\mathbb{F}_q^n$, $S=0$.
\end{proof}

\ \

If $A = B_w(x,D)$ and $S \geq S_w$ then $diam_{d_w}(A)>D$, hence $A$ we will not be $D$-optimal.  We will now describe the subsets of $B_w(x,D)$ that are $D$-optimal anticodes.

For $S \geq S_w$, let $W_w(S)$ be a subset of $\mathbb{F}_q$ of maximal size such that:
\begin{enumerate}
	\item if $a \in W_w(S)$, then $S_w \leq w(a) \leq S$;
	\item if $a \in W_w(S)$ and $0 \leq w(b) \leq S_w - 1$, then $w(a-b) \leq S$;
	\item if $a,b \in W_w(S)$, then $w(a-b) \leq S$.
\end{enumerate}

Let $\mathcal{W}_w(S)$ be the set of all subsets $W_w(S)$.
Given integer non-negative $R$  and $K \in \mathcal{W}_w(S)$, let $$Y_{w,R}(K) \eqdef \{x \in \mathbb{F}_q^n : \varpi(x) = R+1 \text{ and } x_{R+1} \in K\}.$$
In other words, if $x \in Y_{w,R}(K)$, then:
\begin{enumerate}
	\item  $|I_x^P| = R+1$;
	\item $S_w + R \cdot M_w \leq d_w(0,x) \leq S + R \cdot M_w$;
	\item $d_w(x,y) \leq S + R \cdot M_w$ for all $y \in \mathbb{F}_q^n$ such that $d_w(0,y) \leq (S_w - 1) + R \cdot M_w$;
	\item $d_w(x,y) \leq S + R \cdot M_w$ for all $y \in Y_{w,R}(K)$.
\end{enumerate}

\begin{theorem}\label{theo2_A*}
	Let $(\mathbb{F}_q^n,d_w)$ be the weighted chain metric space and $D$ be a non-negative integer. Write $D = S + R \cdot M_w$ with $0 \leq S < M_w$ and let $D' = \lfloor S_w \rfloor_w + R \cdot M_w$. If $S \geq S_w$, then:
\begin{enumerate}
	\item $D' \leq diam_{d_w}((x+Y_{w,R}(K)) \cup B_w(x,D')) \leq D$ for all $x \in \mathbb{F}_q^n$ and for all $K \in \mathcal{W}_w(S)$;
	\item $(x+Y_{w,R}(K)) \cup B_w(x,D')$ is $D$-optimal for all $x \in \mathbb{F}_q^n$ and for all $K \in \mathcal{W}_w(S)$;
	\item If $A$ is $D$-optimal, then $A = (x+Y_{w,R}(K)) \cup B_w(x,D')$ for some $x \in \mathbb{F}_q^n$ and $K \in \mathcal{W}_w(S)$.
\end{enumerate}
Consequently,
 $$A_{d_w}^*(D) =  q^R \cdot (1 + |w^{-1}([S_w - 1]_w)| + |W_w(S)|).$$
\end{theorem}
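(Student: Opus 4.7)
The plan has three stages: prove the diameter bounds in item 1, establish the size bound $|A| \leq q^R(1 + |w^{-1}([S_w-1]_w)| + |W_w(S)|)$ for any anticode $A$ of diameter at most $D$, and characterize the optimal anticodes (item 3) via analysis of equality in the size bound; item 2 and the formula for $A^*_{d_w}(D)$ then follow as corollaries.

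For item 1, translate so $x=0$. Theorem \ref{lemma_diam_S_w-1} applied with $S' = \lfloor S_w \rfloor_w < S_w$ gives $diam_{d_w}(B_w(0,D'))=D'$, and since $B_w(0,D') \subseteq B_w(0,D')\cup Y_{w,R}(K)$, the lower bound $D'\leq diam_{d_w}$ follows. For the upper bound $diam_{d_w}\leq D$, I do a case analysis on pairs $a,b$ from the union. If both lie in $B_w(0,D')$, then $d_w(a,b)\leq D'\leq D$. If $a\in B_w(0,D')$ and $b\in Y_{w,R}(K)$, then $w(a_{R+1})<S_w\leq w(b_{R+1})$, so $a_{R+1}\neq b_{R+1}$, the maximal coordinate where they differ is $R+1$, and condition (2) of $W_w(S)$ yields $w(a_{R+1}-b_{R+1})\leq S$, giving $d_w(a,b)\leq S+R\cdot M_w=D$. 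If both lie in $Y_{w,R}(K)$, then either their $(R+1)$st coordinates agree (forcing $d_w(a,b)\leq R\cdot M_w\leq D$), or they differ and condition (3) of $W_w(S)$ bounds the distance by $D$.

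For the size bound, let $A$ be an anticode with $diam_{d_w}(A)\leq D$ and translate so $0\in A$. Each $a\in A$ satisfies $\varpi_w(a)\leq D$, forcing $a_i=0$ for $i>R+1$ and $w(a_{R+1})\leq S$. Define $\pi:A\to\mathbb{F}_q$ by $\pi(a)=a_{R+1}$; every fiber $\pi^{-1}(k)$ has size at most $q^R$, so $|A|\leq q^R\cdot|\pi(A)|$. The condition $d_w(a,b)\leq D$ for $\pi(a)\neq\pi(b)$ forces $w(\pi(a)-\pi(b))\leq S$. Split $\pi(A)=\pi_1\cup\pi_2$ with $\pi_1=\pi(A)\cap\{c:0\leq w(c)<S_w\}$ and $\pi_2=\pi(A)\cap\{c:S_w\leq w(c)\leq S\}$; trivially $|\pi_1|\leq 1+|w^{-1}([S_w-1]_w)|$, and when $\pi_1$ equals this full set, $\pi_2$ satisfies the three defining conditions of $W_w(S)$, so $|\pi_2|\leq |W_w(S)|$ by maximality. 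The principal obstacle is the case $\pi_1\subsetneq\{c:0\leq w(c)<S_w\}$: here an exchange/matching argument is required, showing that completing $\pi_1$ to the full set while removing the newly induced conflicts from $\pi_2$ does not decrease the total cardinality, via a Hall-type matching between the deleted elements of $\pi_2$ and the added elements of $\pi_1$ through the cross-conflict edges, exploiting the field structure of $\mathbb{F}_q$ and the triangle inequality for $w$.

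For item 3, equality in the size bound forces $\pi_1 = \{c:0\leq w(c)<S_w\}$, $\pi_2\in\mathcal{W}_w(S)$, and every fiber of $\pi$ to have cardinality exactly $q^R$; together these yield $A = B_w(0,D') \cup Y_{w,R}(\pi_2)$, and undoing the translation gives the form asserted. Item 2 and the formula $A_{d_w}^*(D)=q^R(1+|w^{-1}([S_w-1]_w)|+|W_w(S)|)$ are immediate, since the constructed sets $(x+Y_{w,R}(K))\cup B_w(x,D')$ with $K\in\mathcal{W}_w(S)$ have exactly this size and lie in the $D$-anticode family by item 1.
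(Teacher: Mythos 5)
Your item 1 is correct and is essentially the paper's argument: the paper also bounds $diam_{d_w}(Y_{w,R}(K)\cup B_w(x,D'))$ by a case check on the $(R+1)$-st coordinates using conditions 2 and 3 in the definition of $W_w(S)$, and gets the lower bound from $diam_{d_w}(B_w(x,D'))=D'$. Your reduction of the counting problem to the projection $\pi(a)=a_{R+1}$, with $|A|\le q^R\cdot|\pi(A)|$ and the split $\pi(A)=\pi_1\cup\pi_2$, is also a sound reformulation, and your treatment of the case $\pi_1=\{c:w(c)<S_w\}$ is fine. The problem is that the proof stops exactly where the theorem becomes nontrivial: in the case $\pi_1\subsetneq\{c:0\le w(c)<S_w\}$ you announce that ``an exchange/matching argument is required'' via ``a Hall-type matching \dots through the cross-conflict edges,'' but you neither define the bipartite graph, nor verify Hall's condition, nor indicate why the field structure or the triangle inequality would supply it. What actually has to be shown is that every clique containing $0$ in the graph on $\{c\in\mathbb{F}_q:w(c)\le S\}$ with an edge $\{c,c'\}$ whenever $w(c-c')\le S$ has size at most $1+|w^{-1}([S_w-1]_w)|+|W_w(S)|$; i.e., that an anticode cannot profitably discard some low-weight $(R+1)$-coordinates in exchange for a larger set of high-weight ones. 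Since items 2 and 3 and the closed formula for $A_{d_w}^*(D)$ all rest on this upper bound, the theorem is not proved as written.

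For comparison, the paper takes a containment route rather than a counting-plus-exchange route: it shows that any $A$ with $diam_{d_w}(A)\le D$ satisfies $A\subseteq x+X_{w,R}(K)$ for some $x$ and some $K\in\mathcal{W}_w(S)$, by translating so that $0\in A$ and checking, coordinate class by coordinate class, that the set of $(R+1)$-st coordinates of weight at least $S_w$ occurring in $A$ sits inside a member of $\mathcal{W}_w(S)$. (Note that even there one must address condition 2 in the definition of $W_w(S)$, which quantifies over \emph{all} $b$ with $w(b)<S_w$ and not only over those occurring in $\pi_1$ --- this is precisely where your ``principal obstacle'' lives.) If you want to complete your argument, the cleanest fix is to prove that containment statement directly, or to carry out the exchange explicitly by showing that replacing $\pi_1$ with the full set $\{c:w(c)<S_w\}$ and deleting the high-weight elements that then violate condition 2 never decreases $|\pi_1|+|\pi_2|$. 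Either way, the missing step is the heart of the classification, not a technicality.
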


\begin{proof}
Let $K \in \mathcal{W}_w(S)$. If $x,y \in Y_{w,R}(K)$, then $\varpi_w(x) \leq D$, $\varpi_w(y) \leq D$ and $d_w(x,y) \leq D$. This implies that $Y_{w,R} \subseteq B_w(0,D)$ and $diam_{d_w}(Y_{w,R}) \leq D$.

If $$X_{w,R}(K) \eqdef Y_{w,R}(K) \cup B_w(0,D')$$ (see Figure \ref{Y}), we can see that $diam_{d_w}(X_{w,R}(K)) \leq D$, and since $diam_{d_w}(B_w(0,D'))=D'$, it follows that $D' \leq diam_{d_w}(X_{w,R}(K)) \leq D$.

\begin{figure}[h]
\begin{center}
\begin{tikzpicture}[scale=0.45]

\path [draw=none,fill=gray, fill opacity = 0.05] (3,0) circle (6);
\path [draw=none,fill=white, fill opacity = 1] (3,0) circle (3.5);

\draw[dash dot dot, fill=gray!30] (3,0) circle(2.5);

\fill[black]  (3,0) circle (3pt) node [black, below] {\tiny{$0$}};

\draw[dash dot dot] (3,0) circle(3.5);
\filldraw[fill=gray!30, radius=3pt]
    \foreach \i in {2, 3, 4, 13, 14, 15, 17, 21, 28} {
      (3,0) +(0 - \i * 360/30:3.5) circle[]
    };

\draw[loosely dotted] (3,0) circle(3.916);

\draw[loosely dotted] (3,0) circle(4.332);
\filldraw[fill=gray!30, radius=3pt]
    \foreach \i in {0, 1, 3, 11, 12, 15, 18, 19, 20, 25} {
      (3,0) +(0 - \i * 360/31:4.332) circle[fill=gray!30]
    };

\draw[loosely dotted] (3,0) circle(4.748);
\filldraw[fill=gray!30, radius=3pt]
    \foreach \i in {1, 5, 6, 8, 14, 15, 21, 22, 23, 38, 39} {
      (3,0) +(0 - \i * 360/32:4.748) circle[]
    };

\draw[loosely dotted] (3,0) circle(5.164);

\draw[loosely dotted] (3,0) circle(5.58);
\filldraw[fill=gray!30, radius=3pt]
    \foreach \i in {3, 5, 7, 8, 14, 15, 16, 19, 21, 23, 25, 26, 28, 29} {
      (3,0) +(0 - \i * 360/33:5.58) circle[]
    };

\draw[dash dot dot] (3,0) circle(6);
\filldraw[fill=gray!30, radius=3pt]
    \foreach \i in {4, 7, 8, 9, 10, 11, 14, 23, 24, 27, 28, 29} {
      (3,0) +(0 - \i * 360/34:6) circle[]
    };

\draw[ ->] (3,0) -- (5.5,0)  node [midway,fill=white, fill=gray!30] {\tiny{$D'$}};
\draw[->] (3,0) -- ++(66:6)  node [midway,sloped,fill=white] {\tiny{$D$}};
\draw[ ->] (3,0) -- ++(40:3.5)  node [midway,fill=white,fill=gray!30] {\tiny{$\widetilde{D}$}};

\end{tikzpicture}
\caption{The set $X_{w,R}(K)$, where $\widetilde{D} = S_w + R \cdot M_w$.}
\label{Y}
\end{center}
\end{figure}
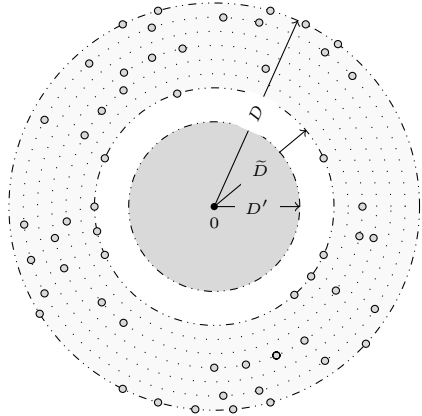

We claim now that if $diam_{d_w}(A) \leq D$, then $A \subseteq x + X_{w,R}(K)$ for some $x \in \mathbb{F}_q^n$ and $K \in \mathcal{W}_w(S)$.
Let $A_a = (-a)+A$ for some $a \in A$.
We have $0 \in A_a$ and $diam_{d_w}(A_a) = diam_{d_w}(A)$. 
Since $A_a \subseteq B_w(0,D)$, $\varpi_w(x) \leq D$ for all $x \in A_a$:
\begin{itemize}
	\item if $\varpi(x) < R+1$, then $x \in B_w(0,D')$;
	\item if $\varpi(x) = R+1$ and $w(x_{R+1}) < S_w$, then $x \in B_w(0,D')$;
	\item if $\varpi(x) = R+1$ and $S_w \leq w(x_{R+1}) \leq S$, since $d_w(x,y) \leq D$ for all $y \in A_a$ such that $\varpi_{P}(y) = R+1$, then $w(x_{R+1} - b) \leq S$ for all $b \in \mathbb{F}_q$ such that $b = y_{R+1}$ for some $y \in A_a$. 
\end{itemize} 
Hence $A_a \subseteq X_{w,R}(K)$ for some $K \in \mathcal{W}_w(S)$, i.e., $A \subseteq x+X_{w,R}(K)$ with $x = a$. Since $diam_{d_w}(x+X_{w,R}(K)) \leq D$, it follows that $A_{d_w}^*(D) = |X_{w,R}(K)|$. As $Y_{w,R}(K) \cap B_w(0,D') = \emptyset$, we conclude that $$A_{d_w}^*(D) =  q^R \cdot (1 + |w^{-1}([S_w - 1]_w)| + |W_w(S)|),$$ and the desire result follows.
\end{proof}

\ \

We conclude this section by showing that (\ref{diam.perf.x}) depends only on $w$ and $d(C)$.

Let $W^w(S)$ be a subset of $\mathbb{F}_q$ of maximal size such that:
\begin{enumerate}
  \item if $a,b \in W^w(S)$ and $a \neq b$, then $w(a-b) \geq S$;
  \item there are $a,b \in W^w(S)$ such that $w(a-b) = S$.
\end{enumerate}

Given an integer $m_w < S <M_w$ such that $S=w(a)$ for some $a \in \mathbb{F}_q$, let
\begin{align*}
	C = \{(0,\ldots,0,&c_R,c_{R+1},\ldots,c_n) : \\ & c_{R+1},\ldots,c_n \in \mathbb{F}_q, \ c_R \in W^w(S) \}.
\end{align*}
We have that $C$ is a code in $\mathbb{F}_q^n$ with minimum distance $d_w(C) = S + (R-1) \cdot M_w$ and size $q^{n-R} \cdot |W^w(S)|$. By Theorem \ref{lemma_diam_S_w-1}, for $D =d_w(C)$ and suppose $S \leq S_w$, it follows that $A_{d_w}^*(\lfloor D \rfloor_w) =  q^{R-1} \cdot \left(1 + |w^{-1}([S-1]_w)| \right)$. Hence, $$A_{d_w}^*(\lfloor D \rfloor_w) \cdot |C| \leq q^n$$ if and only if $$\left(1 + |w^{-1}([S-1]_w)| \right) \cdot |W^w(S)| \leq q.$$ Supposing now $S > S_w$, by Theorem \ref{theo2_A*}, $$A_{d_w}^*(\lfloor D \rfloor_w) =  q^{R-1} \cdot (1 + |w^{-1}([S_w - 1]_w)| + |W_w(S-1)|),$$ and hence, $$A_{d_w}^*(\lfloor D \rfloor_w) \cdot |C| \leq q^n$$ if and only if $$\left(1 + |w^{-1}([S_w - 1]_w)| + |W_w(S-1)| \right) \cdot |W^w(S)| \leq q.$$
The case $S=m_w$ is equivalent to the Singleton bound (Corollary \ref{corol_22}). In short:

\begin{theorem}\label{theo_quando_NRT=Delsate}
	For the weighted chain metric space, we have that (\ref{diam.perf.x}) is equivalent to
\begin{equation}\label{eq1_theo_NRT=delsarte}
	\left(1 + |w^{-1}([S-1]_w)| \right) \cdot |W^w(S)| \leq q
\end{equation}
for all $m_w < S \leq S_w$ and
\begin{align}
	\left(1 + |w^{-1}([S_w - 1]_w)| + |W_w(S-1)| \right) \cdot |W^w(S)| \leq q \label{eq2_theo_NRT=delsarte}
\end{align}
for all $S_w < S < M_w$. Therefore, a code $C$ with minimum distance $d_w(C) = S + R \cdot M_w$ such that $m_w < S < M_w$ is diameter perfect if and only if (\ref{eq1_theo_NRT=delsarte}) or (\ref{eq2_theo_NRT=delsarte}) holds with equality.
\end{theorem}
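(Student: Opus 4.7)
My plan is to instantiate, for each admissible pair $(R,S)$, the explicit family of codes constructed in the paragraph preceding the theorem and show that the Delsarte–type inequality $A^*_{d_w}(\lfloor D \rfloor_w) \cdot |C| \leq q^n$ reduces, after cancelling $q^{n-1}$, to (\ref{eq1_theo_NRT=delsarte}) or (\ref{eq2_theo_NRT=delsarte}) depending on whether $S\le S_w$ or $S>S_w$. Concretely, fix $R\ge 1$ and $S$ with $m_w < S < M_w$ and $S=w(a)$ for some $a\in\mathbb{F}_q$, and set
\[
C=\{(0,\ldots,0,c_R,c_{R+1},\ldots,c_n): c_{R+1},\ldots,c_n\in\mathbb{F}_q,\ c_R\in W^w(S)\}.
\]
Using the defining properties of $W^w(S)$ together with Proposition \ref{lemma_dist_min_geral}, I would confirm that $d_w(C)=S+(R-1)\cdot M_w$ and $|C|=q^{n-R}\cdot|W^w(S)|$.

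Write $D=d_w(C)$ and let $S'$ denote the largest element of $\mathrm{Im}(w)$ strictly less than $S$; a short computation using (\ref{peso_NRT}) shows $\lfloor D\rfloor_w=S'+(R-1)\cdot M_w$. I then split into two cases. If $m_w<S\le S_w$, then $S'<S_w$, and Theorem \ref{lemma_diam_S_w-1} yields
\[
A^*_{d_w}(\lfloor D\rfloor_w)=q^{R-1}\cdot\bigl(1+|w^{-1}([S']_w)|\bigr)=q^{R-1}\cdot\bigl(1+|w^{-1}([S-1]_w)|\bigr),
\]
since no weights lie in $(S',S-1]$ and hence $[S']_w=[S-1]_w$. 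If $S_w<S<M_w$, then $S_w$ itself is a weight with $S_w\le S-1$, so $S'\ge S_w$, and Theorem \ref{theo2_A*} gives
\[
A^*_{d_w}(\lfloor D\rfloor_w)=q^{R-1}\cdot\bigl(1+|w^{-1}([S_w-1]_w)|+|W_w(S')|\bigr),
\]
with $W_w(S')=W_w(S-1)$ by the same non-weight-gap argument. Substituting these values into (\ref{diam.perf.x}) and cancelling the common factor $q^{n-1}$ yields exactly (\ref{eq1_theo_NRT=delsarte}) and (\ref{eq2_theo_NRT=delsarte}), respectively. The boundary case $S=m_w$ is subsumed by Corollary \ref{corol_22}. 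The ``diameter perfect if and only if'' conclusion then follows because equality in (\ref{diam.perf.x}) for this $C$ is, by construction, equivalent to equality in the corresponding displayed inequality.

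The main obstacle is verifying $d_w(C)=S+(R-1)\cdot M_w$. The upper bound is straightforward: property (2) of $W^w(S)$ supplies $a,b\in W^w(S)$ with $w(a-b)=S$, and pairing codewords that agree on positions $R+1,\ldots,n$ and carry $a,b$ in position $R$ yields distance exactly $S+(R-1)M_w$. The lower bound requires more care: for distinct codewords $c,c'$, both vanish on positions $1,\ldots,R-1$, so their largest differing index $j$ satisfies $j\ge R$. If $j>R$ then $\varpi_w(c-c')\ge m_w+(j-1)M_w\ge m_w+RM_w>S+(R-1)M_w$ since $S<M_w$; if $j=R$ then $c$ and $c'$ agree on positions $R+1,\ldots,n$ and differ only at position $R$, where property (1) of $W^w(S)$ forces $w(c_R-c'_R)\ge S$, giving $\varpi_w(c-c')\ge S+(R-1)M_w$ as required.
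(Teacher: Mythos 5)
Your proposal is correct and follows essentially the same route as the paper: it instantiates the code $C$ built from $W^w(S)$, computes $|C|$ and $d_w(C)=S+(R-1)\cdot M_w$, and evaluates $A^*_{d_w}(\lfloor D\rfloor_w)$ via Theorem \ref{lemma_diam_S_w-1} when $S\le S_w$ and Theorem \ref{theo2_A*} when $S>S_w$, reducing (\ref{diam.perf.x}) to (\ref{eq1_theo_NRT=delsarte}) or (\ref{eq2_theo_NRT=delsarte}) after cancelling $q^{n-1}$. The only difference is that you spell out the verification of the minimum distance and the identification $[S']_w=[S-1]_w$, which the paper leaves implicit.
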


If $w$ is non-archimedean, then $S_w = M_w$. Therefore:

\begin{corollary}
		For the weighted chain metric space $(\mathbb{F}_q^n,d_w)$ with $d_w$ ultrametric we have that (\ref{diam.perf.x}) is equivalent to
\begin{equation}\label{eq_coro_NRT=delsarte}
	\left(1 + |w^{-1}([S-1]_w)| \right) \cdot |W^w(S)| \leq q
\end{equation}
for all $S > m_w$. Therefore, a code $C$ with minimum distance $d_w(C) = S + R \cdot M_w$ such that $S > m_w$ is diameter perfect if and only if (\ref{eq_coro_NRT=delsarte}) holds with equality.
\end{corollary}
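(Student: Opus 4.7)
The plan is to derive this corollary as an immediate specialization of Theorem \ref{theo_quando_NRT=Delsate}. The key point is that the hypothesis that $d_w$ is an ultrametric is equivalent to requiring the coordinate weight $w$ on $\mathbb{F}_q$ to be non-archimedean (this equivalence is established earlier in the paper in the results of Section \ref{section_nonarchimedian}; for the chain poset, non-archimedean on coordinates propagates to non-archimedean on $\mathbb{F}_q^n$ via (\ref{peso_NRT})). So the task reduces to translating Theorem \ref{theo_quando_NRT=Delsate} under the assumption that $w$ is non-archimedean.

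First I would invoke the convention introduced just before Theorem \ref{lemma_diam_S_w-1}: when $w$ is non-archimedean, $S_w \eqdef M_w$ by definition. With this substitution, the second admissible range $S_w < S < M_w$ appearing in Theorem \ref{theo_quando_NRT=Delsate} collapses to the empty set, so the condition (\ref{eq2_theo_NRT=delsarte}) disappears. The only surviving condition is (\ref{eq1_theo_NRT=delsarte}), which must hold for all $m_w < S \leq S_w = M_w$; but this range is precisely the range $S > m_w$ of minimum-distance residues that can occur for codes in $(\mathbb{F}_q^n,d_w)$ (given Proposition \ref{lemma_dist_min_geral} and the normalization $0 \leq S < M_w$, with $S = M_w$ absorbed into the next ball-radius level handled by Corollary \ref{corol_22}).

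Next I would observe that under this substitution the inequality (\ref{eq1_theo_NRT=delsarte})
\[
\bigl(1 + |w^{-1}([S-1]_w)|\bigr) \cdot |W^w(S)| \leq q
\]
is syntactically identical to (\ref{eq_coro_NRT=delsarte}). Consequently, Theorem \ref{theo_quando_NRT=Delsate} asserts that (\ref{diam.perf.x}) is equivalent to (\ref{eq_coro_NRT=delsarte}) for every $m_w < S < M_w$, which is the first statement of the corollary. The "diameter perfect iff equality" conclusion for a code $C$ with minimum distance $d_w(C) = S + R \cdot M_w$, $S > m_w$, transfers verbatim from the theorem.

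There is essentially no main obstacle: the proof is a one-line reduction, obtained by substituting $S_w = M_w$ into the hypotheses of Theorem \ref{theo_quando_NRT=Delsate} and noting that the second clause becomes vacuous. The only care needed is to confirm that the ultrametric hypothesis on $d_w$ implies (and is implied by) $w$ being non-archimedean as a coordinate weight, so that the convention $S_w = M_w$ indeed applies.
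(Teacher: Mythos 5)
Your proposal is correct and matches the paper's own derivation exactly: the paper obtains this corollary by noting that $d_w$ ultrametric means $w$ is non-archimedean (Proposition \ref{prop_NRT=ultrametric}), hence $S_w = M_w$ by convention, which voids the second clause of Theorem \ref{theo_quando_NRT=Delsate} and leaves only (\ref{eq1_theo_NRT=delsarte}) over the full range $S > m_w$. Your additional care about the equivalence of ultrametricity with the non-archimedean property and about the boundary case $S = M_w$ is sound but not needed beyond what the paper states.
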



\subsection{Diameter Perfect Codes on $\mathbb{F}_q^n$ with $q$ Prime}\label{sec_diam_per_p_prime}

Now let us suppose that $q =p$ is a prime number. Let $C \subseteq \mathbb{F}_p^n$ be a code with $d_w(C) = S + (d(C) - 1) \cdot M_w$. Since 
\begin{equation*}
	1 + |w^{-1}([S-1]_w)| < p
\end{equation*}
for all $m_w < S < M_w$,
\begin{equation*}
	|W^w(S)| < p
\end{equation*}
for all $m_w < S < M_w$ and
\begin{equation*}
	1 + |w^{-1}([S_w - 1]_w)| + |W_w(S-1)| < p
\end{equation*}
for all $S_w < S < M_w$, by Theorem \ref{theo_quando_NRT=Delsate} if $C$ is diameter perfect, then $S = m_w$. By Corollary \ref{corol_22} and Theorem \ref{theo_MDS_is_diamter_perfect}:
\begin{theorem}\label{theo_p_prime}
	Let $p$ be a prime number. In $(\mathbb{F}_p^n,d_w)$ the only diameter perfect codes are the MDS codes.
\end{theorem}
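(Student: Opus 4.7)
The plan is to combine Theorem \ref{theo_quando_NRT=Delsate} with the primality of $p$ to eliminate every decomposition $d_w(C) = S + (d(C)-1)\cdot M_w$ except $S = m_w$, and then to identify the surviving case with the MDS codes via Corollary \ref{corol_22} and Theorem \ref{theo_MDS_is_diamter_perfect}. First I would write, by Proposition \ref{lemma_dist_min_geral}, $d_w(C) = S + (d(C)-1)\cdot M_w$ with $m_w \leq S \leq M_w$, and split into the sub-cases $S = m_w$, $m_w < S < M_w$, and (if needed) $S = M_w$ reindexed appropriately.

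The core of the argument is verifying the three strict inequalities displayed just before the theorem. Each reduces to locating an element of $\mathbb{F}_p$ that lies outside the indicated set. For the first, $\{0\} \cup w^{-1}([S-1]_w)$ cannot exhaust $\mathbb{F}_p$ because any $c \in \mathbb{F}_p^\ast$ with $w(c) = M_w > S - 1$ is missing, giving $1 + |w^{-1}([S-1]_w)| \leq p - 1$. For the second, if $W^w(S) = \mathbb{F}_p$ then taking $c \in \mathbb{F}_p^\ast$ with $w(c) = m_w$ gives $w(c-0) = m_w < S$, contradicting the defining property, so $|W^w(S)| \leq p-1$. For the third, the sets $\{0\}$, $w^{-1}([S_w - 1]_w)$ (elements of weight $\leq S_w - 1$) and $W_w(S-1)$ (elements of weight $\geq S_w$) are pairwise disjoint, and their union still misses any element of weight $M_w$, giving $1 + |w^{-1}([S_w - 1]_w)| + |W_w(S-1)| \leq p - 1$.

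Next I would exploit primality. In both (\ref{eq1_theo_NRT=delsarte}) and (\ref{eq2_theo_NRT=delsarte}), each factor is a positive integer and, by the previous step, strictly less than $p$; also $|W^w(S)| \geq 2$ since by definition some pair in $W^w(S)$ realizes the weight $S$. Since $p$ is prime, its only factorizations are $1 \cdot p$, so a product of two positive integers can equal $p$ only when one of them is $p$, which is excluded. Hence, by Theorem \ref{theo_quando_NRT=Delsate}, no code with $m_w < S < M_w$ is diameter perfect.

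The remaining case $S = m_w$ falls under Corollary \ref{corol_22}, which shows that the diameter perfect bound coincides with the Singleton bound; combined with Theorem \ref{theo_MDS_is_diamter_perfect}, this gives $C$ diameter perfect iff $C$ is MDS. The main obstacle is really just the bookkeeping in the third inequality, where one must carefully keep track of the disjointness of $w^{-1}([S_w - 1]_w)$ and $W_w(S-1)$ and confirm that an element of maximal weight is outside both; the rest of the argument is essentially forced by the primality of $p$.
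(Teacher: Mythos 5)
Your proposal is correct and follows essentially the same route as the paper: it invokes Theorem \ref{theo_quando_NRT=Delsate} together with the primality of $p$ to rule out every case with $m_w < S < M_w$, and then settles the case $S = m_w$ via Corollary \ref{corol_22} and Theorem \ref{theo_MDS_is_diamter_perfect}. The only difference is that you explicitly justify the three strict inequalities (by exhibiting an element of weight $M_w$ outside each relevant set and checking the disjointness of $\{0\}$, $w^{-1}([S_w-1]_w)$ and $W_w(S-1)$), which the paper merely asserts.
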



\section{Non-Archimedean Weights and Ultrametrics}\label{section_nonarchimedian}

We say that a metric $d$ on $\mathbb{F}_q^n$ is an \textit{ultrametric} if
\begin{equation*}
	d(x,y) \leq \max\{d(x,z),d(z,y)\}
\end{equation*}
for all $x,y,z \in \mathbb{F}_q^n$.

\begin{proposition}\label{prop_NRT=ultrametric}
	The weighted chain distance $d_w$ is an ultrametric if and only if $w$ is a non-archimedean weight on $\mathbb{F}_q$.
\end{proposition}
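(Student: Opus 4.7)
The plan is to exploit the translation invariance of $d_w$ to reduce the ultrametric inequality $d_w(x,y) \leq \max\{d_w(x,z), d_w(z,y)\}$ to the statement $\varpi_w(u+v) \leq \max\{\varpi_w(u), \varpi_w(v)\}$ for all $u,v \in \mathbb{F}_q^n$ (by setting $u = x-z$, $v = z-y$). So I would prove the equivalent claim: $\varpi_w$ satisfies the ultrametric inequality on $\mathbb{F}_q^n$ if and only if $w$ satisfies it on $\mathbb{F}_q$.

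For the forward direction, I would apply the hypothesis to the special vectors $u = (a,0,\ldots,0)$ and $v = (b,0,\ldots,0)$ for arbitrary $a,b \in \mathbb{F}_q$. These satisfy $\varpi_w(u) = w(a)$, $\varpi_w(v) = w(b)$, and $\varpi_w(u+v) = w(a+b)$ (with the trivial convention when $a+b=0$). The ultrametric inequality then gives exactly $w(a+b) \leq \max\{w(a), w(b)\}$, which is the non-archimedean property of $w$.

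For the converse direction, assume $w$ is non-archimedean on $\mathbb{F}_q$. Let $i = \max\{j : u_j \neq 0\}$ and $k = \max\{j : v_j \neq 0\}$ (handling the case $u=0$ or $v=0$ trivially). I would split into cases based on the relationship between $i$ and $k$:
\begin{itemize}
\item If $i \neq k$, say $i > k$ by symmetry, then $(u+v)_i = u_i \neq 0$ and $(u+v)_j = 0$ for $j > i$, so $\varpi_w(u+v) = w(u_i) + (i-1) M_w = \varpi_w(u)$.
\item If $i = k$ and $(u+v)_i = u_i + v_i \neq 0$, then the maximum support index of $u+v$ is still $i$, and the non-archimedean property of $w$ gives $w(u_i+v_i) \leq \max\{w(u_i), w(v_i)\}$; adding $(i-1) M_w$ to both sides yields the desired inequality.
\item If $i = k$ but $u_i + v_i = 0$, then the maximum support index $\ell$ of $u+v$ satisfies $\ell \leq i-1$, so $\varpi_w(u+v) \leq M_w + (\ell-1) M_w \leq (i-1) M_w < w(u_i) + (i-1) M_w = \varpi_w(u)$, using $w(u_i) \geq 1$ since $u_i \neq 0$.
\end{itemize}

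No step presents a real obstacle; the argument is essentially bookkeeping on the leading index. The only delicate point is the cancellation case $u_i + v_i = 0$, where one must recognize that dropping to a strictly smaller leading index produces a strict decrease in weighted chain weight of at least $M_w - w(u_i) + 1$, which keeps $\varpi_w(u+v)$ safely below the maximum.
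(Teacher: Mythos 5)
Your proof is correct and follows essentially the same route as the paper's: translation invariance reduces the ultrametric property of $d_w$ to the non-archimedean property of $\varpi_w$, the forward direction tests coordinate vectors, and the converse analyzes the leading nonzero index. Your explicit treatment of the cancellation case $u_i+v_i=0$ is in fact slightly more careful than the paper's one-line computation, which silently absorbs that case.
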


\begin{proof}
	See Appendix \ref{appendix_d_w_ultrametric}.
\end{proof}

\ \

 We know that if $d_{(P,w)}$ is the weighted chain distance with $w$ non-archimedean, then $diam_{d_{(P,w)}}(B_{(P,w)}(x,D)) = D$ for all $x \in \mathbb{F}_q^n$ and for all $D$ (Theorem \ref{lemma_diam_S_w-1} with $S_w = M_w$). Now, we will show that $diam_{d_{(P,w)}} (B_{(P,w)}(x,D)) = D$ for all $x \in \mathbb{F}_q^n$ and for all $D$ only if $d_{(P,w)}$ is the weighted chain distance with $w$ non-archimedean.

\begin{theorem}\label{theo_dimB=D}
	Let $d_{(P,w)}$ be a weighted poset distance. Then $$diam_{d_{(P,w)}}(B_{d_{(P,w)}}(x,D)) = D$$ for all $x \in \mathbb{F}_q^n$ and for all $D$ if and only if $d_{(P,w)}$ is the weighted chain distance $d_w$ and $d_{(P,w)}$ is an ultrametric.
\end{theorem}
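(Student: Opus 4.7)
The ``if'' direction is already recorded in the earlier results: if $P$ is a chain and $w$ is non-archimedean then $S_w = M_w$ by definition, so Theorem \ref{lemma_diam_S_w-1} (with Proposition \ref{corol_B(0,D)_subspace} covering the case $S=0$) gives $diam_{d_w}(B_w(x,D)) = D$ for every $x$ and every attainable $D$, while Proposition \ref{prop_NRT=ultrametric} says $d_w$ is an ultrametric. I would dispatch this in a single sentence and devote the body of the proof to the converse.

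For the converse, I would assume $diam_{d_{(P,w)}}(B_{(P,w)}(x,D)) = D$ for every $x$ and every attainable $D$, use translation invariance of $d_{(P,w)}$ to reduce to $x=0$, and then rule out the two possible obstructions in turn: (i) $P$ is not a chain; (ii) $P$ is a chain but $w$ is archimedean. The common strategy in both cases is to exhibit two vectors lying in a single ball $B_{(P,w)}(0,D)$ whose difference has weight strictly greater than $D$.

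For (i), I would pick incomparable $i,j \in [n]$, labelled so that $|\langle\{i\}\rangle| \geq |\langle\{j\}\rangle|$, fix some $a \in \mathbb{F}_q \setminus \{0\}$, and compare $u = ae_i$ and $v = ae_j$ inside the ball of radius $D := \varpi_{(P,w)}(u) = w(a) + M_w(|\langle\{i\}\rangle| - 1)$. Both vectors are in that ball; because $i$ and $j$ are incomparable, both are maximal in $\langle\{i,j\}\rangle = \langle\{i\}\rangle \cup \langle\{j\}\rangle$ and $j \notin \langle\{i\}\rangle$, which yields the key cardinality inequality $|\langle\{i,j\}\rangle| \geq |\langle\{i\}\rangle| + 1$. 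Substituting into Definition \ref{def_wcpw} gives $\varpi_{(P,w)}(u-v) = 2w(a) + M_w(|\langle\{i,j\}\rangle| - 2) \geq w(a) + D > D$, the required contradiction; hence $P$ is a chain and $d_{(P,w)} = d_w$.

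For (ii), if $w$ is archimedean there are $a,b \in \mathbb{F}_q$ with $w(a-b) > \max\{w(a),w(b)\} =: D$, and the two vectors $ae_1, be_1$ both lie in $B_w(0,D)$ while their $d_w$-distance equals $w(a-b) > D$, contradicting the assumption at this $D$; hence $w$ is non-archimedean and Proposition \ref{prop_NRT=ultrametric} promotes this to $d_w$ being an ultrametric. The only step with any real content is the ideal-cardinality inequality $|\langle\{i,j\}\rangle| \geq |\langle\{i\}\rangle| + 1$ in (i), where the incomparability of $i,j$ enters; everything else is a direct expansion of the weight formula (\ref{peso_NRT}) and its poset analogue from Definition \ref{def_wcpw}.
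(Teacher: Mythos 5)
Your proposal is correct and follows essentially the same route as the paper's proof: the ``if'' direction via Theorem \ref{lemma_diam_S_w-1} (with $S_w = M_w$) and Proposition \ref{prop_NRT=ultrametric}, and the converse by exhibiting two points of a single ball at distance exceeding the radius, first for incomparable $i,j$ and then for an archimedean pair $a,b$. Your version is in fact slightly more explicit than the paper's in case (i), where the inequality $|\langle\{i,j\}\rangle| \geq |\langle\{i\}\rangle| + 1$ justifies the step $\varpi_{(P,w)}(u-v) > D$ that the paper merely asserts.
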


\begin{proof}
	Suppose that $d_w$ is an ultrametric. By Proposition \ref{prop_NRT=ultrametric}, $w$ is non-ar\-chi\-me\-dian. This implies that $S_w = M_w$. The ``if'' part follows from Theorem \ref{lemma_diam_S_w-1} with $S_w = M_w$.

Let us suppose now that $P = ([n],\leq_P)$ is not a chain order. Then, there are $i,j \in [n]$ not comparable on $P$. Put $x,y \in \mathbb{F}_q^n$ with $supp(x) = \{i\}$ and $supp(y) = \{j\}$ such that $w(x_i)=w(y_j)$, assuming that $S = \varpi_w(x) \leq \varpi_w(y) = R$, we will have that $x,y \in B_w(0,R)$ and $d_w(x,y) = \varpi_w(x-y) > R$, which implies $diam_{d_w}(B_w(0,R)) > R$. This shows that $P$ is a chain order.

Suppose now that $P$ is the chain order and $d_w$ is not an ultrametric. By Proposition \ref{prop_NRT=ultrametric}, $w$ is ar\-chi\-me\-dian, i.e., there are $a,b \in \mathbb{F}_q$ such that $$w(a-b) > \max\{w(a),w(b)\}.$$ Let $x,y \in \mathbb{F}_q^n$ such that $supp(x)=supp(y)=\{R+1\}$ with $x_{R+1} = a$ and $y_{R+1} = b$. If $S = \max\{w(a),w(b)\}$, then $$\varpi_w(x) = w(a) + R \cdot M_w \leq S + R \cdot M_w$$ and $$\varpi_w(y) = w(b) + R \cdot M_w \leq S + R \cdot M_w.$$ Hence $x,y \in  B_w(0,D)$ with $D = S + R \cdot M_w$. But
\begin{eqnarray*}
	d_w(x,y) &=& \varpi_w(x-y) \\
                   &=& w(a-b) + R \cdot M_w \\
                   &>& S + R \cdot M_w \\
                   &=& D.
\end{eqnarray*}
This show that $diam_{d_w}(B_w(0,D)) > D$, and the ``only if'' part follows.
\end{proof}

\ \

From Theorem \ref{theo_dimB=D}, together with Theorem \ref{theo_1_appendix} 
 and Proposition \ref{lemma_bolas}, it follows that:

\begin{corollary}
	Let $(\mathbb{F}_q^n,d_w)$ be a weighted chain metric space and $D$ be a non-negative integer. Write $D = S + R \cdot M_w$ with $0 \leq S < M_w$. Then $d_w$ is an ultrametric if and only if any of the equivalent properties below holds:
\begin{enumerate}
	\item $diam_{d_w}(B_w(x,D)) = D$ for all $x \in \mathbb{F}_q^n$;
    \item $B_w(x,D)$ is $D$-optimal for all $x \in \mathbb{F}_q^n$;
    \item If $A$ is $D$-optimal, then $A = B_w(x,D)$ for some $x \in \mathbb{F}_q^n$;
	\item $A_{d_w}^*(D) =  q^R$ if $S=0$ and $A_{d_w}^*(D) =  q^R \cdot \left(1 + |w^{-1}([S]_w)| \right)$ if $S \neq 0$.
\end{enumerate}
\end{corollary}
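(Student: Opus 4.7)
The plan is to obtain this corollary essentially for free by chaining three results proved earlier in the paper, without introducing any new computation. The four listed conditions will be linked through the intermediate statement $A_{d_w}^*(D) = |B_w(0,D)|$, which appears as item (4) of Theorem \ref{theo_1_appendix}, and the ultrametric hypothesis will be imported via Theorem \ref{theo_dimB=D}.

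Concretely, first I would invoke Theorem \ref{theo_dimB=D}, which says that $d_w$ is an ultrametric if and only if $diam_{d_w}(B_w(x,D)) = D$ for every $x \in \mathbb{F}_q^n$ and every non-negative integer $D$. Under the standing assumption in force since just before Proposition \ref{lemma_bolas}, namely $D = S + R \cdot M_w$ with $S = w(a)$ for some $a \in \mathbb{F}_q$, such a $D$ is realized as $\varpi_w(y)$: simply take $y \in \mathbb{F}_q^n$ with $y_{R+1} = a$ and all other coordinates equal to zero, so that $\varpi_w(y) = w(a) + R \cdot M_w = D$. This legitimizes applying Theorem \ref{theo_1_appendix}, whose four equivalent conditions are precisely items (1), (2), (3) of the present corollary together with the identity $A_{d_w}^*(D) = |B_w(0,D)|$.

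Finally, I would translate this last identity into the explicit arithmetic form of item (4) using Proposition \ref{lemma_bolas}, which evaluates $|B_w(0,D)|$ as $q^R$ when $S = 0$ and as $q^R \cdot (1 + |w^{-1}([S]_w)|)$ when $S \neq 0$. Substituting these values into $A_{d_w}^*(D) = |B_w(0,D)|$ yields item (4), and the circle of equivalences closes. Since Theorem \ref{theo_dimB=D} furnishes the bridge to the ultrametric property, the four listed conditions are then all equivalent to $d_w$ being an ultrametric.

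The only real obstacle is the bookkeeping issue of ensuring that Theorem \ref{theo_1_appendix}'s hypothesis $D = \varpi_w(y)$ is met uniformly for the $D$'s under consideration in the corollary; as noted above, the convention fixed in the paper (that $S = w(a)$ for some $a$) handles this automatically, so nothing substantive remains to be proved beyond citing the three prior results in the correct order.
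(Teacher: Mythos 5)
Your proposal matches the paper's own derivation: the corollary is obtained by chaining Theorem \ref{theo_dimB=D}, Theorem \ref{theo_1_appendix} and Proposition \ref{lemma_bolas} exactly as you describe. The only nit is that your witness $y$ for $D=\varpi_w(y)$ (nonzero only in coordinate $R+1$) degenerates when $S=0$ and $R\geq 1$; there you should instead take $y$ supported on coordinate $R$ with $w(y_R)=M_w$, so that $\varpi_w(y)=M_w+(R-1)\cdot M_w=R\cdot M_w=D$.
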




\section{Proof of Proposition \ref{lemma_bolas}}\label{appendix_2}

\begin{proof}
Since $d_w$ is invariant by translations, we have that $$|B_w(x,D)| = |B_w(0,D)|$$ for all $x \in \mathbb{F}_q^n$ and $D \geq 0$. 
Given $x \in \mathbb{F}_q^n$ such that either $\varpi(x) = R+1$ and $w(x_{R+1}) > S$ or $\varpi(x) > R+1$, $$\varpi_w(x) = w(x_{\varpi(x)}) + (\varpi(x)-1) \cdot M_w > D,$$ i.e., $x \notin B_w(0,D)$. Now for each $x \in \mathbb{F}_q^n$ such that either $\varpi(x) = R+1$ and $w(x_{R+1}) \leq S$ or $\varpi(x) < R+1$ we have that $$\varpi_w(x) = w(x_{\varpi(x)}) + (\varpi(x)-1) \cdot M_w \leq D.$$ Hence $x \in B_w(0,D)$ if and only if either $\varpi(x) = R+1$ and $w(x_{R+1}) \leq S$ or $\varpi(x) < R+1$. Thus $$|B_w(0,D)| = q^R \cdot |w^{-1}([S]_w)| + q^R,$$ and the desire result follows.
\end{proof}

\section{Proof of Proposition \ref{prop_NRT=ultrametric}} \label{appendix_d_w_ultrametric}

\begin{proof}
	If $d_w$ is an ultrametric and there are $x,y \in \mathbb{F}_q$ such that $w(x+y) > \max\{w(x),w(y)\}$, taking $u,v \in \mathbb{F}_q^n$ with $supp(u)=supp(v)=\{i\}$ with $u_i=x$ and $v_i=-y$, we have that
\begin{eqnarray*}
	d_w(u,v) &=& \varpi_w(u-v) \\
				   &=& w(u_i-v_i) + (i-1) \cdot M_w \\
			       &>& \max\{w(x),w(y)\} + (i-1) \cdot M_w \\
		           &=& \max\{w(x) + (i-1) \cdot M_w, \\ 
		           & & \hspace{3.0cm} w(y) + (i-1) \cdot M_w\} \\
                   &=& \max\{\varpi_w(u),\varpi_w(v)\} \\
                   &=& \max\{d_w(u,0),d_w(0,v)\},
\end{eqnarray*}
which is a contradiction. Thus $w$ is a non-archimedean weight.

Suppose now that $w$ is a non-archimedean weight. We claim that $\varpi_w$ is a non-archimedean weight on $\mathbb{F}_q^n$: if $x,y \in \mathbb{F}_q^n$ and $i=\max\{j:x_j + y_j \neq 0\}$, then
\begin{eqnarray*}
	\varpi_w(x+y) &=& w(x_i + y_i) + (i-1) \cdot M_w \\
                  & \leq & \max\{w(x_i),w(y_i)\} + (i-1) \cdot M_w \\
                  &=& \max\{w(x_i) + (i-1) \cdot M_w, \\
                  & & \hspace{2.3cm} w(y_i) + (i-1) \cdot M_w\} \\
                  & \leq & \max\{\varpi_w(x),\varpi_w(y)\}.
\end{eqnarray*}
Hence
\begin{eqnarray*}
	d_w(x,y) &=& \varpi_w(x-z-y+z) \\
                & \leq & \max\{\varpi_w(x-z),\varpi_w(z-y)\} \\
                   &=& \max\{d_w(x,z),d_w(z,y)\}
\end{eqnarray*}
for all $x,y \in \mathbb{F}_q^n$. Thus $d_w$ is an ultrametric.
\end{proof}


%
%
%
%
%
%
%

\end{document}